\documentclass[11pt]{article}
\usepackage{amssymb}
\usepackage{amsfonts,amsmath, longtable}

\usepackage{comment}

\newtheorem{theorem}{Theorem}



\topmargin 0pt      \oddsidemargin 0pt
        \headheight 0pt \headsep 0pt
        \voffset=-0.5cm
        \hoffset=-0.25in
        \textwidth 6.75in
        \textheight 9.25in       
        \marginparwidth 0.0in
        \parskip 5pt plus 1pt   \jot = 1.5ex

\newcommand{\tr}{{\rm tr}}
\newcommand{\ti}[1]{\tilde{#1}}

\newcommand{\om}{\omega}

\newcommand{\Mat}{ {\rm Mat}_N }

\newcommand{\mC}{\mathbb C}
\newcommand{\mZ}{\mathbb Z}

\newcommand{\h}{\hbar}

\newtheorem{predl}{Proposition}[section]

\newenvironment{proof}{\par\noindent{\bf Proof.}}{\hfill$\scriptstyle\blacksquare$}

\def\beq{\begin{equation}}
\def\eq{\end{equation}}
\def\p{\partial}


\newcommand{\mats}[4]{\left(\begin{array}{cc}{#1}&{#2}\\ {#3}&{#4}
\end{array}\right)}

\def\res{\mathop{\hbox{Res}}\limits}

\begin{document}

\setcounter{page}{1}

\begin{center}


{\Large{\bf Large $N$ limit of spectral duality in classical}}

\vspace{3mm}

{\Large{\bf
integrable systems
}}



 \vspace{10mm}

 {\Large {R. Potapov}}
\qquad\quad\quad
 {\Large {A. Zotov}}

  \vspace{8mm}

 {\em Steklov Mathematical Institute of Russian
Academy of Sciences,\\ Gubkina str. 8, 119991, Moscow, Russia}

{\em Institute for Theoretical and Mathematical Physics,\\
Lomonosov Moscow State University, Moscow, 119991, Russia}
%


 {\small\rm {e-mails: trikotash@ya.ru, zotov@mi-ras.ru}}

\end{center}

\vspace{0mm}

\begin{abstract}
We describe the large $N$ limit of spectral duality between rational Gaudin models introduced
by Adams, Harnad and Hurtubise. The limit of the ${\rm gl}_N$ model is performed by means of
a noncommutative torus algebra represented by the fields on a torus with the Moyal-Weyl star product.
We apply the approach developed by Hoppe, Olshanetsky and Theisen to the Gaudin-type models
and describe the corresponding integrable field theory (2d hydrodynamics) on a torus.
The dual model is the large $N$ limit of
the ${\rm gl}_M$ Gaudin model with $N$ marked points written in the form of the Gaudin model
 with irregular singularities.
\end{abstract}

%

{\small{ \tableofcontents }}


\section{Introduction}\label{sec1}
\setcounter{equation}{0}
%
In this paper, we deal with classical integrable systems of Gaudin type \cite{G}.
The rational ${\rm gl}_N$ Gaudin model is described by an $N\times N$ Lax matrix
  \beq\label{q01}
  \begin{array}{c}
    \displaystyle{
 L(z)=V+\sum\limits_{a=1}^M\frac{S^a}{z-z_a}\in{\rm Mat}_N\,,\quad S^a\in{\rm Mat}_N\,,
 }
 \end{array}
 \eq
which is a rational ${\rm Mat}_N$-valued function of the spectral parameter $z$ -- a local coordinate
 on a punctured Riemann sphere ${\mathbb CP}^1\backslash\{z_1,...,z_M\}$.
It has simple poles at the marked points $z_1,...,z_M$ with residues $\res\limits_{z=z_a}L(z)=S^a$, $a=1,...,M$,
and $V={\rm diag}(v_1,...,v_N)$ is a constant diagonal twist matrix. The matrix elements of $S^a$ are dynamical variables.
The model is Hamiltonian. Its Poisson structure is a direct sum of the Poisson-Lie structures
on the Lie coalgebras ${\rm gl}_N^*$ corresponding to all residues $S^a$. Equations of motion are represented in the Lax form
\beq \label{q02}
\displaystyle{
{\dot L}(z)+[L(z),M(z)]=0\,,
}
\eq
where ${\dot L}(z)$ is a derivative with respect to time variable $t$ corresponding to some Hamiltonian.
The Hamiltonians can be computed from either $\tr L^k(z)$ or coefficients of the spectral curve
\beq\label{w03}
\displaystyle{
 \det\limits_{N\times N}(v-L(z))=0\,,
}
\eq
which is a complex curve in $\mC^2$ parameterized by $v$ and $z$.

Suppose there is another integrable model described by an $M\times M$ Lax matrix ${\tilde L}(v)$, and its
spectral curve
\beq\label{w04}
\displaystyle{
\det\limits_{M\times M}(z-\tilde{L}(v))=0
}
\eq
coincide with (\ref{w03}). Then these two models are called spectrally dual
to each other\footnote{In fact, besides the coincidence of the spectral curves one should also require
equality of Seiberg-Witten differentials and/or existence of the Poisson map between
 two phase spaces, see details
in \cite{MMZZ,MMZZR1}.}.
The first time this relation was observed for $N$-site closed Toda chain that has $2\times 2$ and $N\times N$ Lax representations \cite{FT}. For rational Gaudin models it was described by  Adams, Harnad and Hurtubise (see Theorem 2.1 in \cite{AHH}). Consider the
${\rm gl}_M$ Gaudin model of the form
  \beq\label{q03}
  \begin{array}{c}
    \displaystyle{
 {\ti L}(v)=Z+\sum\limits_{k=1}^N\frac{{\ti S}^k}{v-v_k}\in{\rm Mat}_M\,,\quad {\ti S}^a\in{\rm Mat}_M\,,
 }
 \end{array}
 \eq
where $Z={\rm diag}(z_1,...,z_M)$.
 The duality implies that the residues of  the Lax matrices
are rank one matrices of the following form:
\beq\label{w05}
    \displaystyle{
S^a_{ij}=\xi^{a}_{i}\eta^{a}_{j}\,,\quad i,j=1,...,N;\quad a=1,...,M\,,
}
\eq
and
\beq\label{w06}
\displaystyle{
{\ti S}^k_{ab}=\xi_{k}^{a}\eta_{k}^{b}\,,\quad k=1,...,N;\quad a,b=1,...,M\,.
}
\eq
Later this duality was generalized to a similar relation between the classical XXX spin chain and the trigonometric Gaudin model \cite{MMZZ,MMZZR1}, and also for a pair of the classical XXZ spin chains \cite{MMZZR2}.
Spectral duality has found numerous applications in studies of gauge theories and
the corresponding Seiberg-Witten geometry \cite{MM,MMZZ,N1}. In particular,
in \cite{MMZZ} the spectral duality was treated as a special limit of the AGT correpondence between conformal and gauge theories.
 The phenomenon of spectral duality has
different generalizations and interpretations in monodromy preserving equations, Knizhnik-Zamolodchikov
equations and related representation theory, see \cite{GVZ} and references therein.
It is also closely related to the Ruijsenaars action–angle duality, which can be described
through spectral duality \cite{PZ}.
At quantum level the spectral (or bispectral) dualities were described earlier by Mukhin, Tarasov and Varchenko
\cite{MTV}. An alternative description in terms multi-matrix models was suggested
by Bertola, Eynard, Harnad and Luu \cite{BEHL}.

\paragraph{Purpose of the paper.} Our main goal is to formulate and prove a large $N$ limit of the spectral duality
between Gaudin models.
The ${\rm gl}_N$ Gaudin model in this limit is described by infinite-dimensional Lax matrices.
Instead of working with an infinite-dimensional matrix space, we use the approach \cite{BHS,FFZ,HPS,JS,R}
 based on treating
$N \to \infty$ limit of $\Mat$
space in terms of the infinite-dimensional algebra
of noncommutative torus (NCT) $A_{\hbar}$. From the viewpoint of integrable systems,
the Lax matrices are then considered as $A_{\hbar}$ valued operators.
Then our aim is to formulate an infinite-dimensional analogue of rank 1 matrices (\ref{w05})
and give a definition of determinant
in order to define an infinite-dimensional analogue for the spectral curve (\ref{w03}).
For this purpose, we use two representations of the NCT algebra \cite{HOT}.
The first one is the representation in $\mathrm{Mat}_{\infty}(\mathbb{C}[[\hbar^{-1}, \hbar]])$
-- associative algebra of infinite-dimensional matrices with finitely many nonzero
diagonals\footnote{In fact, we deal with a smaller space -- a subspace in ${\rm Mat}_\infty$ with certain
convergence conditions, see (\ref{n8}).}.
The second one is the representation in the
algebra of smooth functions on a two-dimensional torus
\beq\label{w02}
\displaystyle{
A_{\hbar}\to\Big(C^{\infty}(\mathbb{T}^{2})[[\hbar,\hbar^{-1}]], \star\Big)
}
\eq
endowed with
the associative Moyal (or Weyl) deformation
quantization star-product $\star$ on $\mathbb{T}^{2}$
\cite{Weyl}:
\beq\label{w81}
  \begin{array}{c}
  \displaystyle{
f\star{g}=fg+\ti\hbar\sum\limits_{i,j=1}^2\vartheta^{ij}\partial_i{f}\partial_j{g}
+\frac{\ti\hbar^{2}}{2!}\sum\limits_{i,j=1}^2
\sum\limits_{k,l=1}^2\vartheta^{ij}\vartheta^{kl}\partial_i\partial_k{f}
\partial_j\partial_l{g}+\ldots=
}
\\ \ \\
  \displaystyle{
=\sum\limits_{n=0}\limits^{\infty}\frac{\ti\hbar^n}{n!}\sum\limits_{i_1,...,i_n;
j_1,...,j_n=1}^2\prod\limits^{n}\limits_{k=1}\vartheta^{i_k{j_k}}(\prod\limits_{k=1}
\limits^{n}\partial_{i_k})(f)
\times(\prod\limits_{k=1}\limits^{n}\partial_{j_k})(g)\,,
}
\end{array}
\eq
where $\vartheta^{ij}$ is the constant and skew-symmetric canonical Poisson structure on two-dimensional space,
and $\ti\hbar$ differs from $\hbar$ by some factor, which we fix below.
This representation provides the field-theoretical interpretation, since
the usage of (\ref{w02})-(\ref{w81}) leads to integrable field theories (or hydrodynamics) on $\mathbb{T}^{2} \times \mathbb{R}$.
The dynamical fields take values in $C^{\infty}(\mathbb{T}^{2})[[\hbar,\hbar^{-1}]]$, and this is the approach used for the description of
noncommutative field theories  \cite{DN}.
The same approach of the large $N$ limit can be applied to integrable systems \cite{HOT,Olsh}. In this way
certain integrable hydrodynamics with the Euler-type equations of motion arises \cite{Arnold}.
We apply a similar construction to describe the large $N$ limit in the ${\rm gl}_N$ Gaudin model.

Our first goal is to describe the infinite-dimensional limit of \eqref{q01} and obtain the corresponding integrable field theory on $\mathbb{T}^{2} \times \mathbb{R}$.
Both representations of $A_{\hbar}$ mentioned above  are related through the change of basis
in the space $\mathrm{Mat}_{\infty}(\mathbb{C}[[\hbar^{-1}, \hbar]])$.
The first one is the standard basis and the second is the
one used in the description of sin-algebra. The standard basis is needed since the rank one matrices (\ref{w05})
are easily written in this basis, and the second one is naturally related to the representation (\ref{w02}).
Put it differently, we use the following embeddings:
$
	(C^{\infty}(\mathbb{T}^{2})[[\hbar^{-1},\hbar]], \star)
 \hookrightarrow A_\hbar \hookrightarrow \mathrm{Mat}_{\infty}(\mathbb{C}[[\hbar^{-1}, \hbar]])
$
and treat them as faithful representations.
We also use an infinite-dimensional analogue of the classical $r$-matrix structure to ensure the integrability
of the described large $N$ limit.


Finally, we define the spectral curve for the infinite-dimensional Gaudin model (the large $N$ limit in the ${\rm gl}_N$ Gaudin model). It is written in the form of power series in spectral parameter. The dual model is described by the
Laurent power series of finite-dimensional matrix variables.
It corresponds to a certain limit in the Gaudin model with irregular singularities.
This type of models (where the Lax matrix has higher-order poles) was introduced by Feigin, Frenkel, Reshetikhin and Laredo \cite{FFRL}.
The spectral dualities in Gaudin models
with irregular singularities were studied in \cite{VY} at the quantum level. In our paper, we do not
consider the gluing of marked points (or twist parameters) and the corresponding degenerations described by Jordan blocks. At the same time, the dual model in our description is given by the Lax matrix with higher-order poles.

In summary, our results establish a relation between a certain class of 2+1 integrable field theories 
(including those of hydrodynamical type) and the large $N$ limit in Gaudin models with irregular singularities.

The paper is organized as follows. In the first Section we define the NCT algebra and review its algebraic properties. In the
next Section we describe the infinite-dimensional limit for the Gaudin models. In Section \ref{sec4} the large $N$ limit of the spectral duality in Gaudin models is proposed.

\section{Algebra of noncommutative torus}\label{sec2}
\setcounter{equation}{0}
In this section we recall basic definitions and properties related to the NCT algebra. Here we mostly follow \cite{HOT,Olsh}.
It is explained below that the NCT algebra may be represented using infinite-dimensional matrices
and how it arises in the deformation quantization on a torus.

The NCT algebra $A_{\h}$ is defined
 as an associative algebra over the ring of formal Laurent series $\mathbb{C}[[\hbar^{-1}, \hbar]]$:
 \footnote{By $\langle\langle U_{1},U_{2}\rangle\rangle$ we mean a free algebra of infinite series.}
	\beq\label{n1}
	\displaystyle{A_{\hbar} = \mathbb{C}[[\hbar^{-1}, \hbar]]\langle\langle U_{1},U_{2}\rangle\rangle/\langle U_{1}U_{2} - \omega U_{2}U_{1}\rangle, \,\,\,\,\, \omega = e^{4\pi i \hbar}\,.
	}
	\eq
It is convenient to introduce a special basis in $A_{\h}$ \cite{FFZ}:
\beq\label{n2}
\displaystyle{ T_{\vec{m}} = \frac{i}{4\pi\hbar} \omega^{\frac{m_{1}m_{2}}{2}} U_{1}^{m_{1}}U_{2}^{m_{2}}\,,
\quad {\vec m}=(m_1,m_2)\in\mZ\times\mZ\,.
}
\eq
Then the multiplication law takes the form
\beq\label{n3}
\displaystyle{
T_{\vec{m}}T_{\vec{n}}  = \frac{i}{4 \pi \hbar}
	\omega^{-\frac{1}{2}\,\vec{m}\times\vec{n}}T_{\vec{m}+\vec{n}}\,,
}
\eq
where $\vec{m}\times\vec{n}=m_1n_2-m_2n_1$.

\subsection{Finite-dimensional representation and related structures}
Assume for now that $\h \in \mathbb{R}$, and $A_{\h}$ is an algebra over $\mathbb{C}$.
The defining relation is
\beq\label{n31}
\displaystyle{
U_{1}U_{2} = \omega U_{2}U_{1}\,.
}
\eq
 When $\hbar$ is rational ( $\hbar = \frac{m}{n} \in \mathbb{Q}$ )  the algebra (\ref{n1}) has a finite-dimensional representation $\rho$ in $\mathrm{Mat}_{n} (\mathbb{C})$ \cite{Belavin}. Let for simplicity $\omega=\exp(2\pi i/n)$.
Then
\beq\label{n6}
\rho(U_{1})=Q=
\left(\begin{array}{ccccc}
1&0&0&\cdots&0\\
0&\omega&0&\cdots&0\\
0&0&\omega^2&\cdots&0\\
\vdots&\vdots&\ddots&\ddots&\vdots\\
0&0&0&\cdots&\omega^{n-1}
\end{array}\right)\,,\quad
\rho(U_{2}) = \Lambda=
\left(\begin{array}{ccccc}
0&1&0&\cdots&0\\
0&0&1&\cdots&0\\
\vdots&\vdots&\ddots&\ddots&\vdots\\
0&0&0&\cdots&1\\
1&0&0&\cdots&0
\end{array}\right)
\eq
and $Q\Lambda = \om \Lambda Q$.
The basis (\ref{n2}) then corresponds to the following basis in ${\rm Mat}_n$:
\beq\label{cg5}
\displaystyle{
	\rho(T_{\vec{a}}) = I_{\vec{a}} =e^{i \pi \frac{a_{1}a_{2}}{n}}Q^{a_{1}}\Lambda^{a_{2}}\,,
\quad a_1,a_2\in\mZ/n\mZ\,.
}
\eq
In particular, $I_{(0,0)}=1_n$ is the identity $n\times n$ matrix.
Originally, this basis was used in \cite{Belavin} to construct the classical
elliptic $r$-matrix\footnote{In this paper we do not deal with elliptic models. The large $N$ limit
of elliptic models will be studied in our next paper. Here we just mention
that the finite-dimensional representation (\ref{n6}) of the Heisenberg group (\ref{n31})
is known to be naturally related to elliptic functions.}
\beq\label{cg2}
\displaystyle{
	r_{12}(z) = \frac{\vartheta'(z)}{n\vartheta(z)} 1_n\otimes 1_n+
 \sum\limits_{\vec{a} \in \mZ^{2}_{n}/\{0,0\}}
I_{\vec{a}}\otimes I_{-\vec{a}}\,
\exp\Big(2\pi i\frac{a_{2}}{n}  z\Big)\frac{\vartheta'(0)\vartheta(z + \frac{a_{1}+a_{2}\tau}{n})}{n\vartheta(z)\vartheta( \frac{a_{1}+a_{2}\tau}{n})}\,,
}
\eq
where $\vartheta(z)$ is the first Jacobi theta function on elliptic curve with moduli $\tau$.
It is a solution of the
classical Yang-Baxter equation
\beq\label{cg1}
\displaystyle{
	[r_{12}(z_{1}-z_{2}),r_{13}(z_{1}-z_{3})]+[r_{12}(z_{1}-z_{2}),r_{23}(z_{2}-z_{3})] + [r_{13}(z_{1}-z_{3}),r_{23}(z_{2}-z_{3})] = 0\,.
}
\eq
In the rational limit (${\rm Im}(\tau)\to 0$) the Belavin-Drinfeld $r$-matrix (\ref{cg2}) turns into
the rational XXX $r$-matrix
\beq\label{cg21}
\displaystyle{
	r_{12}(z) =\frac{P_{12}}{z}\,,
}
\eq
where $P_{12}$ is the matrix permutation operator
\beq\label{n221}
\displaystyle{
P_{12}=\sum\limits_{i,j=1}^n E_{ij}\otimes E_{ji}=\frac{1}{n}\sum\limits_{a\in\mZ_n\times\mZ_n}I_a\otimes I_{-a}\,.
}
\eq
In order to define the Hamiltonian mechanics for the Gaudin type models the Poisson-Lie structure
on the Lie coalgebra ${\rm gl}_n^*$ is required. For a dynamical matrix variable $S=\sum_{ij} E_{ij}S_{ij}$ the brackets
\beq\label{cg22}
\displaystyle{
\{S_{ij},S_{kl}\}=S_{kj}\delta_{il}-S_{il}\delta_{kj}\,,\quad i,j,k,l=1,...,n
}
\eq
(constructed through the Lie algebra relations $[E_{ij},E_{kl}]=E_{il}\delta_{kj}-E_{kj}\delta_{il}$) are
written in the tensor form (see \cite{FT})
\beq\label{cg23}
\displaystyle{
	\{S_1,S_2\}=[P_{12},S_1]\,,\qquad S_1=S\otimes 1_n\,,\ S_2=1_n\otimes S\,.
}
\eq
Using the finite-dimensional analogue of (\ref{n3})
$I_{\vec m}I_{\vec k}=\exp\Big(\frac{\pi i}{n}(k_1m_2-k_2m_1)\Big)I_{{\vec m}+{\vec k}}$ one  easily gets
\beq\label{cg24}
\displaystyle{
[I_{\vec m},I_{\vec k}]=2i\sin\Big(\frac{\pi i}{n}(k_1m_2-k_2m_1)\Big)I_{{\vec m}+{\vec k}}\,,
\qquad m_1,m_2,k_1,k_2\in\mZ/n\mZ\,.
}
\eq
Then (\ref{cg23}) provides the Poisson brackets
\beq\label{cg25}
\displaystyle{
\{S_{\vec m},S_{\vec k}\}=\frac{2i}{n}\sin\Big(\frac{\pi i}{n}(k_1m_2-k_2m_1)\Big)S_{{\vec m}+{\vec k}}\,.
}
\eq
Notice that one may obtain the Lie algebra of $A_{\h}$ as a certain infinite-dimensional limit of $\mathfrak{gl}_{n}(\mathbb{C})$ \cite{BHS}.
The commutation relations (\ref{cg24}) are the finite-dimensional analogue of the sin-algebra, which we consider below. The brackets (\ref{cg25}) will be generalized to the infinite-dimensional case.

 \subsection{Infinite-dimensional matrix representation}
 Notice that the structure constants \eqref{n3} depend on $\h$. This dependence makes the algebras $A_{\h}$ with different values of $\h$ be nonisomorphic to each other \cite{JS}.
When $\hbar\notin \mathbb{Q}$ the algebra \eqref{n1} can be embedded into the algebra of infinite-dimensional matrices with finitely many nonzero diagonals $\mathrm{Mat}_{\infty}(\mathbb{C}[[\hbar^{-1}, \hbar]])$,
that is $A_{\h}$ has representation in $\mathrm{Mat}_{\infty}(\mathbb{C}[[\hbar^{-1}, \hbar]])$.

 Choose the basis of matrix units $E_{ab}$  in  $\mathrm{Mat}_{\infty}(\mathbb{C}[[\hbar^{-1}, \hbar]])$.
 Then the above mentioned faithful representation
 is given by the map:
	\beq\label{n7}
	\displaystyle{
		W:\,\,\,T_{m_{1}m_{2}} \to \frac{i}{4\pi \hbar}\om^{\frac{m_{1}m_{2}}{2}}\sum\limits_{k \in \mathbb{Z}}\om^{m_{1}k}E_{k,k+m_{2}}\,.
	}
	\eq
In order to invert this map one should take smaller a subalgebra of $\mathrm{Mat}_{\infty}(\mathbb{C}[[\hbar^{-1}, \hbar]])$. In particular, we need matrices with convergent traces. Consider
 the subalgebra $\mathrm{Mat_{conv}}(\mathbb{C})$, which consists of
matrices $\sum\limits_{a,b \in \mathbb{Z}}M_{ab}E_{ab}$, $M_{ab} \in \mathbb{C}$ with the property
\beq\label{n8}
\displaystyle{\sup\limits_{ab}(1+a^{2}+b^{2})^{k}|M_{ab}|^{2} < \infty\, \,\,\,\,\, \forall k \in \mathbb{N}\,.
}
\eq
Then the inverse of the map (\ref{n7}) is as follows:
	\beq\label{n9}
	\displaystyle{
	W^{-1}:\,\,\,	E_{ab} \to \frac{4 \pi \h}{i}\sum\limits_{k \in \mathbb{Z}}\om^{-\frac{a+b}{2}k}T_{k,b-a}\,.
	}
	\eq
	%
%
%
Notice that $T_{k,b-a}$ in the r.h.s. of (\ref{n9}) are rather infinite-dimensional matrices
$W(T_{k,b-a})$ from the space $\mathrm{Mat_{conv}}(\mathbb{C})$ than elements of $A_\hbar$.
 In what follows we do not use different notations for different representations of $T_{\vec m}$,
 they are stripped down but clear in context.

To summarize, we deal with the infinite-dimensional matrix space $\mathrm{Mat_{conv}}(\mathbb{C})$ (\ref{n8})
and use two bases: the standard one $E_{ab}$ and $T_{\vec m}$ (\ref{n7}).
The relationship between the bases is as follows:
\beq\label{dn1}
\displaystyle{
	T_{\vec{m}} = \sum\limits_{i,j \in \mZ}W_{\vec{m},ij}E_{ij}\,,\,\,\,\,\,\,\,\,\,\, E_{ab} = \sum\limits_{
    \vec{n}\in \mZ^{2}}W^{-1}_{ab,\vec{n}}T_{\vec{n}}\,,
}
\eq
where
\beq\label{n10}
\displaystyle{W_{\vec{m},ij} = \frac{i}{4\pi\h}w^{\frac{m_{1}m_{2}}{2}}w^{m_{1}i}\delta_{j,i+m_{2}}	
}
\eq
and
\beq\label{n11}
\displaystyle{
	W_{ij,\vec{n}}^{-1} = \frac{4 \pi \h}{i}w^{-\frac{i+j}{2}n_{1}}\delta_{n_{2},j-i}\,.
}
\eq
In order to construct the Hamiltonians in integrable systems we also need to define the trace:
%
\beq\label{n12}
\displaystyle{
	\tr T_{m_{1}m_{2}} = \frac{i}{4 \pi \h}\delta_{m_{1},0}\delta_{m_{2},0}
}
\eq
with the property
\beq\label{n13}
\displaystyle{
	\tr (AB) = \tr (BA)\,,\quad A,B\in\mathrm{Mat_{conv}}(\mathbb{C})\,.
}
\eq
Being restricted to $\mathrm{Mat_{conv}}(\mathbb{C})$, this definition provides the usual matrix trace.

 \subsection{Infinite-dimensional representation via deformation quantization on a torus}
The field-theoretical interpretation of integrable models on $A_{\h}$ is related to
another representation (\ref{w02}) of $A_\hbar$. Consider the space
of smooth functions on a torus $C^{\infty}(\mathbb{T}^{2})$. It is endowed with the canonical Poisson structure
\beq\label{n14}
\displaystyle{ \left\{f,g\right\}(\phi_{1},\phi_{2}) = \p_{\phi_{1}}f\p_{\phi_{2}}g - \p_{\phi_{2}}f\p_{\phi_{1}}g\,, \,\,\,\,\,\,\,\,\,\, f,g \in C^{\infty}(\mathbb{T}^{2})\,.
}
\eq
The deformation quantization of \eqref{n14} via the Moyal-Weyl star product is given
by (\ref{w81}), where we fix $\ti\hbar=2\pi i \hbar$. It can be written in short as
\beq\label{n15}
\displaystyle{ f \star g = m \circ e^{\frac{1}{2}\,4\pi i \h \left(\p_{\phi_{1}}\otimes\p_{\phi_{2}} - \p_{\phi_{2}}\otimes\p_{\phi_{1}}\right)} f\otimes g\,,
}
\eq
where $m(a\otimes b) = ab$.
This product makes $(C^{\infty}(\mathbb{T}^{2})[[\h]], \star)$ an associative algebra with the property
\beq\label{n16}
\displaystyle{ \left[f,g\right]_{\star} = 4\pi i \h \left\{f,g\right\} + O(\h^2)\,.
}
\eq
Details of the deformation quantization on $\mathbb{T}^{2}$ can be found in \cite{R}.
Let $f \in C^{\infty}(\mathbb{T}^{2})$. Its Fourier series is of the form
\beq\label{n17}
\displaystyle{ f = \sum\limits_{\vec{m} \in \mathbb{Z}^{2}}f_{\vec{m}}e^{i \vec{m} \vec{\phi}}\,.	
}
\eq
The statement on the existence of the representation (\ref{w02}) is based on the map
\beq\label{n19}
\displaystyle{T_{\vec{m}} \to \frac{i}{4 \pi \h}\,e^{i \vec{m} \vec{\phi}}	
}
\eq
and the property of (\ref{n15})
\beq\label{n18}
\displaystyle{
	e^{i \vec{m} \vec{\phi}} \star 	e^{i \vec{n} \vec{\phi}} = \om^{-\frac{1}{2}\, \vec{m} \times \vec{n}}e^{i (\vec{m}+\vec{n})\vec{\phi}}\,.
}
\eq
%
%
%
For $f \in C^{\infty}(\mathbb{T}^{2})[[\h^{-1},\h]]$ the trace \eqref{n12} takes the form
\beq\label{n20}
\displaystyle{ \tr f = \frac{1}{4 \pi^{2}}\int \limits_{\mathbb{T}^{2}}\mathrm{d}\vec{\phi} f(\vec{\phi})\,.
}
\eq

\subsection{Sin-algebra and Poisson structure}

\paragraph{Sin-algebra and its coalgebra.}
The Lie algebra $L_{\h}$  of $A_{\h}$ is defined through commutation relations. Its structure constants in the basis \eqref{n2} are written as the sin-algebra \cite{FFZ}:
\beq\label{n21}
\displaystyle{ [T_{\vec{m}},T_{\vec{n}}] = \frac{1}{2 \pi \h} \sin (2 \pi \h \,\vec{m} \times \vec{n})T_{\vec{m}+\vec{n}}\,,
\qquad \vec{m} \times \vec{n} = m_{1}n_{2} - n_{1}m_{2}\,.
}
\eq
Define the Poisson-Lie structure on  the dual space $L^{*}_{\h}$ as
\beq\label{n22}
\displaystyle{
	\left\{S_{\vec{m}},S_{\vec{n}}\right\} = \frac{1}{2 \pi \h} \sin (2 \pi \h \, \vec{m} \times \vec{n})S_{\vec{m}+\vec{n}}\,,
}
\eq
%
Let us represent it in a form similar to (\ref{cg23}).
For this purpose consider the infinite-dimensional analogue of the permutation operator (\ref{n221})
	\beq\label{f9}
	\displaystyle{  \tilde{P}_{12} =
\sum\limits_{\vec{m} \in \mathbb{Z}^{2}} T_{\vec{m}} \otimes T_{-\vec{m}}\in L_\hbar^{\otimes 2}\,.
	}
	\eq
Notice that \eqref{f9} is obtained by $n \to \infty$ limit of $nP_{12}$.
%
\begin{predl}\label{nl3}
	\beq\label{n23}
	\displaystyle{
		\left\{S_{1},S_{2}\right\} = [\tilde{P}_{12},S_{1}]\,, \qquad S_1=S\otimes T_{(0,0)}\,,\ S_2= T_{(0,0)}\otimes S\,,
	}
	\eq
	where $\tilde{P}_{12}$ is given by (\ref{f9}).	
\end{predl}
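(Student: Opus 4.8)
The plan is to prove the tensor-form identity \eqref{n23} by directly expanding both sides in the $T_{\vec m}$ basis and matching structure constants, reducing everything to the commutation relation \eqref{n21} of the sin-algebra. First I would write the dynamical variable in the basis as $S=\sum_{\vec k}S_{\vec k}T_{\vec k}$, so that the left-hand side becomes
\beq
\{S_1,S_2\}=\sum_{\vec m,\vec n}\{S_{\vec m},S_{\vec n}\}\,T_{\vec m}\otimes T_{\vec n}
=\sum_{\vec m,\vec n}\frac{1}{2\pi\h}\sin(2\pi\h\,\vec m\times\vec n)\,S_{\vec m+\vec n}\,T_{\vec m}\otimes T_{\vec n}\,,
\eq
where I used the Poisson bracket \eqref{n22}. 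The goal is to show the right-hand side $[\tilde P_{12},S_1]$ produces exactly the same coefficient in front of each basis element $T_{\vec a}\otimes T_{\vec b}$.

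Next I would expand the right-hand side. Using $\tilde P_{12}=\sum_{\vec\ell}T_{\vec\ell}\otimes T_{-\vec\ell}$ from \eqref{f9} and $S_1=S\otimes T_{(0,0)}=\sum_{\vec m}S_{\vec m}\,T_{\vec m}\otimes T_{(0,0)}$, the commutator factorizes across the two tensor legs:
\beq
[\tilde P_{12},S_1]=\sum_{\vec\ell,\vec m}S_{\vec m}\,[T_{\vec\ell},T_{\vec m}]\otimes\big(T_{-\vec\ell}\,T_{(0,0)}\big)\,.
\eq
Here the first leg is governed by the Lie bracket \eqref{n21}, giving $\tfrac{1}{2\pi\h}\sin(2\pi\h\,\vec\ell\times\vec m)T_{\vec\ell+\vec m}$, while in the second leg $T_{(0,0)}$ acts essentially as an identity up to the normalization constant $\tfrac{i}{4\pi\h}$ built into the product \eqref{n3}. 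I would verify from \eqref{n3} that $T_{-\vec\ell}T_{(0,0)}=\tfrac{i}{4\pi\h}T_{-\vec\ell}$, so that the second leg simply reproduces $T_{-\vec\ell}$ up to the same overall factor that appears on both sides and in the Poisson bracket normalization; this is why the conventions in \eqref{n2}--\eqref{n3} and \eqref{n22} are chosen consistently.

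Then I would match coefficients. Setting $\vec a=\vec\ell+\vec m$ (the first-leg index) and $\vec b=-\vec\ell$ (the second-leg index) fixes $\vec\ell=-\vec b$ and $\vec m=\vec a+\vec b$, turning the sum into $\sum_{\vec a,\vec b}\tfrac{1}{2\pi\h}\sin(2\pi\h\,(-\vec b)\times(\vec a+\vec b))\,S_{\vec a+\vec b}\,T_{\vec a}\otimes T_{\vec b}$. Using bilinearity and antisymmetry of the cross product, $(-\vec b)\times(\vec a+\vec b)=-\vec b\times\vec a=\vec a\times\vec b$, so the coefficient of $T_{\vec a}\otimes T_{\vec b}$ is exactly $\tfrac{1}{2\pi\h}\sin(2\pi\h\,\vec a\times\vec b)\,S_{\vec a+\vec b}$, which coincides term by term with the left-hand side after relabeling $(\vec a,\vec b)\to(\vec m,\vec n)$.

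The main obstacle I anticipate is bookkeeping rather than conceptual: correctly tracking the $\tfrac{i}{4\pi\h}$ normalization factors that are hidden inside the associative products \eqref{n3} so that the identity $T_{\vec\ell}\,T_{(0,0)}\propto T_{\vec\ell}$ holds with the right constant, and confirming that $T_{(0,0)}$ genuinely behaves as a unit for the commutator (so that multiplying the second leg by it is harmless). A secondary point requiring care is the convergence of the doubly-infinite sums defining $\tilde P_{12}$ and the formal manipulation of reindexing; since we work within the convergence class $\mathrm{Mat_{conv}}(\mathbb C)$ of \eqref{n8} and $S$ is a fixed element with decaying coefficients, the rearrangements are justified, but I would state this explicitly to keep the formal sums well defined.
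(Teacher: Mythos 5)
Your proposal is correct and is essentially the paper's own proof: both expand the two sides in the $T_{\vec m}$ basis, use the sin-algebra relations (\ref{n21})--(\ref{n22}) on the first tensor leg, and conclude by the reindexing $\vec\ell=-\vec b$, $\vec m=\vec a+\vec b$, which is exactly the substitution made in (\ref{n25}). The only cosmetic difference is that you track the factor $T_{-\vec\ell}\,T_{(0,0)}=\frac{i}{4\pi\h}T_{-\vec\ell}$ explicitly, whereas the paper tacitly treats $T_{(0,0)}$ as the unit and takes (\ref{n24}) as the definition of the left-hand side --- the same convention, so nothing substantive changes.
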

\begin{proof}
	The proof is very similar to the finite-dimensional computation:
	\beq\label{n25}
	\begin{array}{c}
		\displaystyle{
			[\tilde{P}_{12},S_{1}] = \sum\limits_{\vec{m},\vec{a} \, \in \mathbb{Z}^{2}}S_{\vec{m}}\left[T_{\vec{a}},T_{\vec{m}}\right]\otimes T_{-\vec{a}} =
		}
		\\ \ \\
		\displaystyle{
			= \sum\limits_{\vec{m},\vec{a} \, \in \mathbb{Z}^{2}}S_{\vec{m}}\frac{1}{2\pi \h}\sin(2\pi \h \, \vec{a} \times \vec{m})T_{\vec{m} + \vec{a}}\otimes T_{-\vec{a}} = \sum\limits_{\vec{m},\vec{n} \, \in \mathbb{Z}^{2}}S_{\vec{m} + \vec{n}}\frac{1}{2\pi \h}\sin(2\pi \h \, \vec{m} \times \vec{n})T_{\vec{m}}\otimes T_{\vec{n}}\,.
		}
	\end{array}
	\eq
By definition, the l.h.s. of (\ref{n23}) equals
\beq\label{n24}
\displaystyle{
	\left\{S_{1},S_{2}\right\} = \sum\limits_{\vec{m},\vec{n} \, \in \mathbb{Z}^{2}}
\left\{S_{\vec{m}},S_{\vec{n}}\right\}
 T_{\vec{m}}\otimes T_{\vec{n}}
}
\eq
and this finishes the proof.
\end{proof}

As a result we have infinitely many functions commuting with respect to the Poisson brackets (\ref{n22}):
\beq\label{n26}
\displaystyle{
	\{\tr S^{k} , \tr S^{l}\} = 0\,, \,\,\,\, \forall k\,,l \in \mathbb{N}\,.
}
\eq
These are, in fact, the Casimir functions of \eqref{n22}.

\paragraph{Field theory on a torus.}
Using the representation \eqref{n19} we may rewrite the upper relations (\ref{n21})-(\ref{n26})
in terms of functions on a torus. Indeed, it follows from (\ref{n18}) that
\beq\label{n261}
\displaystyle{
[e^{i \vec{m} \vec{\phi}},e^{i \vec{n} \vec{\phi}}]_\star\equiv
e^{i \vec{m} \vec{\phi}} \star 	e^{i \vec{n} \vec{\phi}} - e^{i \vec{n} \vec{\phi}} \star e^{i \vec{m} \vec{\phi}} 	
= -2i\sin (2 \pi \h \,\vec{m} \times \vec{n})e^{(\vec{m}+\vec{n})\vec{\phi}}\,.	
}
\eq
Therefore, for
\beq\label{n262}
\displaystyle{
T_{\vec m}(\vec{\phi})=	\frac{i}{4 \pi \h}\,e^{i \vec{m} \vec{\phi}}	
}
\eq
we have
\beq\label{n263}
\displaystyle{
[T_{\vec m}(\vec{\phi}),T_{\vec n}(\vec{\phi})]_\star=
\frac{1}{2 \pi \h} \sin (2 \pi \h \,\vec{m} \times \vec{n})T_{\vec{m}+\vec{n}}(\vec{\phi})\,.
}
\eq
Next, consider a field on a torus
\beq\label{n264}
\displaystyle{
S(\vec{\phi})= \sum\limits_{\vec{m}\in \mZ^{2}}e^{i \vec{m}\vec{\phi}}S_{\vec m}\,.
}
\eq
The Poisson structure for these fields follows from (\ref{n22}).

\begin{predl}\label{nl4}
	\beq\label{n28}
	\displaystyle{
		\{S(\vec{\phi}), S(\vec{\theta})\} = \frac{i}{4 \pi \h}[S(\vec{\phi}),\delta(\vec{\theta}-\vec{\phi})]_{\star} = \frac{2i}{\pi^{2}(4\pi \h)^{4}} \int_{\mathbb{T}^{2}}\mathrm{d}\vec{\psi}S(\vec{\psi})\sin(\frac{1}{2 \pi \h}|\vec{\phi}\vec{\psi}\vec{\theta}|)\,,
	}
	\eq
	where
	\beq\label{n29}
	\displaystyle{
		|\vec{\phi}\vec{\phi'}\vec{\phi''}| = \vec{\phi} \times \vec{\phi'}+\vec{\phi'} \times \vec{\phi''}+\vec{\phi''} \times \vec{\phi}\,,
	}
	\eq
and
	\beq\label{n30}
	\displaystyle{
		\delta(\vec{\phi}) = \sum\limits_{\vec{m} \in \mZ^{2}}e^{i\vec{m}\vec{\phi}}\,.
	}
	\eq
\end{predl}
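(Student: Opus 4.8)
The plan is to establish the two equalities in (\ref{n28}) separately: first that the Poisson bracket of the two fields equals the $\star$-commutator of $S(\vec{\phi})$ with the periodic delta function, and then that this $\star$-commutator can be rewritten as the displayed integral over $\mathbb{T}^{2}$.

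For the first equality I would work in the Fourier basis (\ref{n264}). Using bilinearity of the bracket together with the mode relation (\ref{n22}) gives
\[
\{S(\vec{\phi}),S(\vec{\theta})\}=\frac{1}{2\pi\h}\sum_{\vec{m},\vec{n}\in\mZ^{2}}e^{i\vec{m}\vec{\phi}}e^{i\vec{n}\vec{\theta}}\sin(2\pi\h\,\vec{m}\times\vec{n})\,S_{\vec{m}+\vec{n}}\,.
\]
On the other side I would substitute the Fourier series (\ref{n30}) for $\delta(\vec{\theta}-\vec{\phi})$, expand $S(\vec{\phi})$, and evaluate the $\star$-commutator termwise with (\ref{n261}) (equivalently with (\ref{n18})). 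Matching the two double sums is then a relabelling $\vec{m}+\vec{n}\mapsto\vec{m}$ of the summation index, where the antisymmetry of the cross product and the identity $\vec{n}\times\vec{n}=0$ collapse the argument of the sine so that the two expressions coincide.

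For the second equality I would pass to the integral-kernel form of the Moyal product (\ref{n15}). With the normalization $\ti{\hbar}=2\pi i\h$ the noncommutativity parameter is $4\pi\h$, so $\tfrac{2}{4\pi\h}=\tfrac{1}{2\pi\h}$ is exactly the coefficient appearing inside the sine in (\ref{n28}); the kernel phase is governed by $(\vec{\phi}-\vec{y})\times(\vec{\phi}-\vec{z})=|\vec{\phi}\vec{y}\vec{z}|$ in the notation (\ref{n29}). Forming the commutator replaces the pair of exponentials by $2i\sin(\tfrac{1}{2\pi\h}|\vec{\phi}\vec{y}\vec{z}|)$, and integrating out the variable carried by $\delta(\vec{\theta}-\vec{z})$ through its sifting property (with the $(2\pi)^{2}$ torus volume of (\ref{n20})) sets $\vec{z}=\vec{\theta}$ and leaves a single integral over the remaining variable, which I would rename $\vec{\psi}$. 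Collecting all prefactors then reproduces the right-hand side of (\ref{n28}).

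The main obstacle is not the algebra but the careful handling of the integral representation on the compact torus and the resulting overall constant and sign. On $\mathbb{T}^{2}$ the triangle-area phase $|\vec{\phi}\vec{\psi}\vec{\theta}|$ is not periodic in $\vec{\psi}$, so the kernel form of (\ref{n15}) and the final integral must be read distributionally rather than as a naive $\mathbb{R}^{2}$ Gaussian kernel; pinning down the correct normalization here, together with that of $\delta(\vec{\phi})=\sum_{\vec{m}}e^{i\vec{m}\vec{\phi}}$ relative to the torus volume, is what fixes the prefactor in (\ref{n28}). I would therefore carry every factor of $\tfrac{i}{4\pi\h}$ and $(2\pi)^{2}$ explicitly through each step, and use the antisymmetry of $|\vec{\phi}\vec{\psi}\vec{\theta}|$ under transposition of its arguments as the structural fact guaranteeing a sine (rather than cosine) kernel and the mutual consistency of the mode-sum and integral descriptions.
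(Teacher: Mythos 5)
Your proposal is correct and takes essentially the same route as the paper: the first equality is obtained exactly as in the paper's proof, by expanding in Fourier modes, applying the bracket \eqref{n22}, and relabelling $\vec{m}+\vec{n}\mapsto\vec{m}$ so that \eqref{n261} (equivalently \eqref{n18}) turns the mode sum into the $\star$-commutator with $\delta(\vec{\theta}-\vec{\phi})$. The second equality is likewise handled as in the paper, by substituting the delta-function \eqref{n30} into the integral-kernel representation of the Moyal product (the paper's \eqref{n32}), whose phase $\frac{1}{2\pi\h}|\vec{\phi}\vec{\phi'}\vec{\phi''}|$ produces the sine kernel upon antisymmetrization.
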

\begin{proof}
	Let us first comment on the star-commutator in \eqref{n28}. The star product \eqref{n15} is defined for a pair of functions at the same point. In \eqref{n28} we treat the variable $\vec{\theta}$ as a parameter, and calculate the star product at point $\vec{\phi}$, i.e.
	\beq\label{n311}
	\displaystyle{
		S(\vec{\phi}) \star \delta(\vec{\phi} - \vec{\theta}) = m \circ e^{\frac{1}{2}4\pi i \h \left(\p_{\phi_{1}}\otimes\p_{\phi_{2}} - \p_{\phi_{2}}\otimes\p_{\phi_{1}}\right)} S(\vec{\phi})\otimes \delta(\vec{\phi} - \vec{\theta})\,.
	}
	\eq
With this remark we have
	\beq\label{n33}
	\begin{array}{c}
		\displaystyle{
			\{S(\vec{\phi}), S(\vec{\theta})\}=  \sum\limits_{\vec{m},\vec{n} \in \mZ^{2}}e^{i \vec{m}\vec{\phi}}e^{i \vec{n}\vec{\theta}}\left\{S_{\vec{m}},S_{\vec{n}}\right\} = \sum\limits_{\vec{m},\vec{n} \in \mZ^{2}}e^{i \vec{m}\vec{\phi}}e^{i \vec{n}\vec{\theta}}\frac{1}{2 \pi \h} \sin (2 \pi \h \, \vec{m} \times \vec{n})S_{\vec{m}+\vec{n}} =
		}
		\\ \ \\
		\displaystyle{
			= \sum\limits_{\vec{m},\vec{n} \in \mZ^{2}}S_{\vec{m}}e^{i \vec{m}\vec{\phi}}e^{i \vec{n}(\vec{\theta}-\vec{\phi})}\frac{1}{2 \pi \h} \sin (2 \pi \h \, \vec{m} \times \vec{n}) = \sum\limits_{\vec{m},\vec{n} \in \mZ^{2}}S_{\vec{m}}e^{i \vec{n}\vec{\theta}}\frac{i}{4 \pi \h}[e^{i \vec{m}\vec{\phi}},e^{-i \vec{n}\vec{\phi}}]_{\star} =
		}
		\\ \ \\
		\displaystyle{
			= \frac{i}{4 \pi \h}[S^{a}(\vec{\phi}),\delta(\vec{\theta}-\vec{\phi})]_{\star}\,.
		}
	\end{array}
	\eq
	The integral form of the Poisson brackets in \eqref{n28} is obtained from the integral representation of the Moyal-Weyl product
	\beq\label{n32}
	\displaystyle{
		f \star g \,(\vec{\phi}) = - \frac{i}{\pi^{2}(4 \pi \h)^{3}} \int_{\mathbb{T}^{2} \times \mathbb{T}^{2}}\mathrm{d} \vec{\phi'} \mathrm{d} \vec{\phi''} e^{\frac{2i}{4 \pi \h}|\vec{\phi}\vec{\phi'}\vec{\phi''}|}f(\vec{\phi}')g(\vec{\phi''})\,.
	}
	\eq
	One should substitute the delta-function into \eqref{n32}.
\end{proof}

Notice that the Poisson bracket (\ref{n28}) is non-ultralocal due to the presence of derivatives of the delta-function.
Therefore, the Poisson brackets of fields at different points are nonzero.
In this representation the relation \eqref{n26} holds true for the functions
\beq\label{n34}
\displaystyle{
	\tr S^{k} = \frac{1}{4\pi^{2}}\int _{\mathbb{T}^{2}}\mathrm{d} \vec{\phi}\,S^{\star k }(\vec{\phi})\,,
\qquad S^{\star k }(\vec{\phi})=\underbrace{S(\vec{\phi})\star\dots\star S(\vec{\phi})}_{\hbox{$k$ times}}\,.
}
\eq

One may also consider the ''classical limit'' $\h \to 0$. In this limit the bracket \eqref{n28} turns into
\beq\label{n35}
\begin{array}{c}
	\displaystyle{
		\{S(\vec{\phi}), S(\vec{\theta})\}_{0} = \lim \limits_{\h \to 0}\{S(\vec{\phi}), S(\vec{\theta})\} = \partial_{\phi_{1}}S(\vec{\phi})\partial_{\phi_{2}}\delta(\vec{\theta}-\vec{\phi}) - \partial_{\phi_{2}}S(\vec{\phi})\partial_{\phi_{1}}\delta(\vec{\theta}-\vec{\phi})\,,
	}
\end{array}
\eq
and the Casimir functions take the form
\beq\label{n36}
\displaystyle{
	\tr S^{k} = \frac{1}{4\pi^{2}}\int _{\mathbb{T}^{2}}\mathrm{d} \vec{\phi}\,S^{k }(\vec{\phi})\,.
}
\eq

Below we describe the large $N$ limit of the Gaudin model. From viewpoint of the above description
we deal with a field theory on $\mathbb{T}^{2} \times \mathbb{R}$, where the real line is for time variable.
A dynamical field variable $S \in C^{\infty}(\mathbb{T}^{2} \times \mathbb{R})$ is then written as follows:
\beq\label{n27}
\displaystyle{
	S(\vec{\phi},t) = \sum\limits_{\vec{a} \in \mZ^{2}}S_{\vec{a}}(t)e^{i \vec{a}\vec{\phi}}\,.
}
\eq

\section{Large $N$ limit in ${\rm gl}_N$ Gaudin models}\label{sec3}
\setcounter{equation}{0}
In this Section we recall the description of the classical ${\rm gl}_N$ Gaudin model and study its
 large $N$ limit using the field theory approach (\ref{n27}) developed in \cite{HOT,Olsh}.

\subsection{Classical $r$-matrix}
 Let us formulate the Lax representation in terms of the classical $r$-matrices, which are solutions to the
 classical Yang-Baxter equation (\ref{cg1}).
 \paragraph{The finite-dimensional case.}
For any classical $r$-matrix\footnote{We deal with non-dynamical and skew-symmetric $r$-matrices, that
is $r_{12}(z)=-r_{21}(-z)$ holds.} solving (\ref{cg1}) one can associate the Gaudin model with $M$ marked points
 in the following way.
The phase space
	\beq\label{cg70}
	\displaystyle{
		\underbrace{\mathfrak{gl}_{N}^{*}\times ... \times \mathfrak{gl}_{N}^{*}}_M
	}
	\eq
 is
endowed with the Poisson-Lie brackets
	\beq\label{cg7}
	\displaystyle{
		\{S_{1}^{a},S_{2}^{b}\} = \delta^{ab} \left[P_{12},S^{a}_{1}\right]\,,
	}
	\eq
	and the Lax matrix has the form
	\beq\label{cg8}
	\displaystyle{
		L(z) = \sum\limits_{k = 1}^M\tr_{2}\Big(r_{12}(z-z_{k})S^{k}_{2}\Big)\,,
	}
	\eq
where $\tr_{2}$ is the trace over the second tensor component. With the rational XXX $r$-matrix (\ref{cg21})
the Lax matrix (\ref{cg8}) turns into (\ref{q01}) with $V=0$.
Due to the classical Yang-Baxter equation \eqref{cg1} the Lax matrix (\ref{cg8}) satisfies the linear $r$-matrix structure
\beq\label{cg9}
\displaystyle{
	\{L_{1}(z),L_{2}(w)\} = \left[ r_{12}(z-w),L_{2}(z)+L_{1}(w)\right]\,,
}
\eq
which guarantees that traces of powers of $L(z)$ are Poisson commuting Hamiltonians:
\beq\label{cg10}
\displaystyle{
	\{\tr L^{k}(z), \tr L^{m}(w)\} = 0\,, \,\,\,\, \forall\, k\,,m \in \mathbb{N}\,.
}
\eq
The constant twist matrix $V$ can be added to $L(z)$ if $\left[ r_{12}(z-w),V_{1}+V_{2}\right]=0$, and
this is the case for $r$-matrix (\ref{cg21}).
Below we describe $V=0$ case for the sake of brevity.
The Gaudin Hamiltonians are defined as
\beq\label{cg11}
\displaystyle{
	H_{i} = \frac{1}{2}\res \limits_{z = z_{i}}\tr L^{2}(z)\,,\quad i=1,...,M\,
}
\eq
For the classical $r$-matrices under consideration we have the property
\beq\label{cg111}
\displaystyle{
	\res\limits_{z=0}r_{12}(z)=P_{12}\,.
}
\eq
Then
\beq\label{cg112}
\displaystyle{
	H_{i} = \sum\limits_{k:k\neq i}^N\tr_{1,2}\Big(r_{12}(z_i-z_k)S_2^kS_1^i\Big)\,,\quad i=1,...,M\,.
}
\eq

 \paragraph{The infinite-dimensional case.} Notice that the above given definition uses the commutation relations in
 the Lie algebra (together with the Poisson brackets in the Lie coalgebra) and the existence of an invariant trace.
 These structures in the infinite-dimensional case are given by (\ref{n21}), (\ref{n22}), (\ref{n263}) and (\ref{n12}), (\ref{n20}) respectively. This allows to perform straightforward generalization to the large $N$ limit.
In particular the classical rational $r$-matrix \eqref{cg21} is replaced with the following one
\beq\label{rg1}
\displaystyle{
	r_{12}(z) = \frac{1}{z}\sum\limits_{\vec{a} \in \mZ^{2}}T_{\vec{a}}\otimes T_{-\vec{a}}
  \in L_{\h}^{\otimes 2} \,.
}
\eq
It satisfies the classical Yang-Baxter equation \eqref{cg1}, and the proof is a direct generalization
of the finite-dimensional one.

Let us define the
	$A_{\h}$ rational Gaudin model as a model on the phase space  $\underbrace{L^{*}_{\h}\times ... \times L^{*}_{\h}}_\text{$M$}$ with the Poisson-Lie structure
	\beq\label{rg2}
	\displaystyle{
		\{S_{\vec{m}}^{a},S_{\vec{n}}^{b}\} = \frac{\delta^{ab}}{2 \pi \h} \sin (2 \pi \h \, \vec{m} \times \vec{n})S_{\vec{m}+\vec{n}}^{a}\,.
	}
	\eq
	The Lax matrix
	\beq\label{rg3}
	\displaystyle{
		L(z) = \sum\limits_{a =1}^M \frac{S^{a}}{z-z_{a}} \in L_{\h}\,,\qquad S^{k} =\sum\limits_{\vec{a} \in \mZ^{2}}S^{k}_{\vec{a}}T_{\vec{a}}\in L_{\h}
	}
	\eq
can be considered in the (at least) two representations -- in  $\mathrm{Mat}_{\infty}(\mathbb{C}[[\hbar^{-1}, \hbar]])$ (\ref{n7}) with the trace (\ref{n12}), or in the field case (\ref{w02}) with the trace (\ref{n20}).
Similarly to the finite-dimensional case, this Lax matrix satisfies \eqref{cg9} with the $r$-matrix \eqref{rg1}, and hence gives an infinite number of commuting Hamiltonians as in (\ref{cg10}).

Finally, let us remark that the described approach works in the same way for more complicated
models including the the elliptic one (\ref{cg2}).

\subsection{Equations of motion}
As in the finite-dimensional case the Hamiltonians
\beq\label{rg11}
\displaystyle{
		H_{i} = \frac{1}{2}\,\res_{z = z_{i}}\tr L^{2}(z)\, =  \sum\limits_{k:k \neq i}^M\sum\limits_{\vec{a} \in \mZ^{2}}\frac{S_{\vec{a}}^{i}S^{k}_{-\vec{a}}}{z_{i}-z_{k}}\,.
}
\eq
provide equations of motion
\beq\label{rg12}
\displaystyle{
	\partial_{t_{a}}S^{a} = \frac{1}{(4 \pi \h)^{2}}\sum\limits_{k:k \neq a}^M \frac{[S^{a},S^{k}]}{z_{a}-z_{k}}\,,
}
\eq
\beq\label{rg13}
\displaystyle{
	\partial_{t_{a}}S^{k} = -\frac{1}{(4 \pi \h)^{2}}\frac{[S^{a},S^{k}]}{z_{a}-z_{k}}\,,\,\,\,\, k\neq a\,,
}
\eq
which are represented in the Lax form (\ref{q02})
 with the Lax matrix \eqref{rg3} and
	\beq\label{rg14}
	\displaystyle{
		M_{a}(z) = \frac{1}{(4\pi \h)^{2}} \frac{S^{a}}{z-z_a}
	}
	\eq
corresponding to dynamics with respect to $H_a$.

Similarly, in the field theory formulation we deal with the Lax pair
\beq\label{rg15}
\displaystyle{
	L(z,\vec{\phi}) = \sum\limits_{k = 1}^M\frac{S^{k}(\vec{\phi})}{z-z_{k}}\,,
\qquad
  M_{i}(z,\vec{\phi}) = \frac{S^{i}(\vec{\phi})}{z-z_{i}}
}
\eq
and the set of Hamiltonians
\beq\label{rg17}
\displaystyle{
	H_{j} = \frac{1}{2}\,\res_{z = z_{j}}\tr L^{2}(z)\, = \frac{1}{4 \pi^{2       }}\sum\limits_{k:k \neq j}^M\int_{\mathbb{T}^{2}}\mathrm{d}\vec{\phi}\,\frac{S^{j}(\vec{\phi}) \star S^{k}(\vec{\phi})}{z_{j}-z_{k}}\,.
}
\eq
The equations of motion
\beq\label{rg20}
\displaystyle{
	\partial_{t_{a}}S^{a}(\vec{\phi}) = \frac{i}{4\pi \h}\sum\limits_{k \neq a}\frac{[S^{a}(\vec{\phi}),S^{k}(\vec{\phi})]_{\star}}{z_{a}-z_{k}}\,,
}
\eq
\beq\label{rg19}
\displaystyle{
	\partial_{t_{a}}S^{k}(\vec{\phi}) = \frac{i}{4\pi \h}\frac{[S^{k}(\vec{\phi}),S^{a}(\vec{\phi})]_{\star}}{z_{a}-z_{k}}\,,\,\,\,\,\, k\neq a
}
\eq
are generated by the Hamiltonians (\ref{rg17}) and the Poisson brackets as in (\ref{n28})
	\beq\label{rg191}
	\displaystyle{
		\{S^a(\vec{\phi}), S^b(\vec{\theta})\} = \frac{i\delta^{ab}}{4 \pi \h}[S^a(\vec{\phi}),\delta(\vec{\theta}-\vec{\phi})]_{\star}\,.
	}
	\eq
Equations (\ref{rg20})-(\ref{rg19}) are written in the Lax form as follows:
\beq\label{rg18}
\displaystyle{
	\partial_{t_{j}}L(z,\vec{\phi}) = \frac{i}{4 \pi \h}[M_j(z,\phi),L(z,\phi)]_{\star}\,.
}
\eq

\paragraph{The limit $\h \to 0$.}
In the limiting case $\h \to 0$
we come to
	2+1 field theory on $\mathbb{T}^{2} \times \mathbb{R}$ with $M$ fields $S^{k}$ and the Poisson brackets
	\eqref{n35}. The Hamiltonians take the form
	\beq\label{rg21}
	\displaystyle{
		H_{i}^{0} = \frac{1}{4 \pi^{2       }}\sum\limits_{k:k \neq i}^M\int_{\mathbb{T}^{2}}\mathrm{d}\vec{\phi}\,\frac{S^{i}(\vec{\phi}) S^{k}(\vec{\phi})}{z_{i}-z_{k}}
	}
	\eq
	The model is integrable with infinitely many integrals of motion in involution
	\beq\label{rg22}
	\displaystyle{
		Q_{k}^{0}(z) = \frac{1}{4\pi^{2}}\int_{\mathbb{T}^{2}}\mathrm{d}\vec{\theta}\,L^{k}(z,\vec{\theta})\,.
	}
	\eq
	The Lax pair (\ref{rg15}) satisfies the following equation:
\beq\label{rg23}
\displaystyle{
	\partial_{t_{p}}L(z,\vec{\phi}) = \{-M_{p}(z,\vec{\phi}),L(z,\vec{\phi})\}\,,
}
\eq
The equations of motion are as follows:
\beq\label{rg25}
\displaystyle{
	\partial_{t_{a}}S^{a}(\vec{\phi}) = -\sum\limits_{k:k \neq a}^M\frac{\{S^{a}(\vec{\phi}),S^{k}(\vec{\phi})\}}{z_{a}-z_{k}}\,,
}
\eq
\beq\label{rg24}
\displaystyle{
	\partial_{t_{a}}S^{k}(\vec{\phi}) = -\frac{\{S^{k}(\vec{\phi}),S^{a}(\vec{\phi})\}}{z_{a}-z_{k}}\,,\,\,\,\,\, k\neq a
}
\eq
with $\{f,g\} = \partial_{\phi_{1}}f\partial_{\phi_{2}}g-\partial_{\phi_{2}}f\partial_{\phi_{1}}g$.

\section{Large $N$ limit of spectral duality between rational Gaudin models}\label{sec4}
\setcounter{equation}{0}
\subsection{The Adams-Harnad-Hurtubise construction}
Let us briefly recall the main idea underlying the spectral duality between the rational Gaudin models in
the finite-dimensional case \cite{AHH}  (see Theorem 2.1). Consider a pair of ${\rm gl}_N$ and ${\rm gl}_M$ models
with the Lax matrices
\beq\label{sd3}
\displaystyle{L_{ij}(z) = v_{i}\delta_{ij} +
	\sum\limits_{a = 1}^{M}\frac{\xi_{i}^{a}\eta_{j}^{a}}{z-z_{k}}}\,,\quad i,j=1,...,N
\eq
and
\beq\label{sd4}
\displaystyle{
\tilde{L}_{ab}(v) = z_{a}\delta_{ab} +
	\sum\limits_{k = 1}^{N}\frac{\xi_{k}^{a}\eta_{k}^{b}}{v-v_{k}}}\,,\quad a,b=1,...,M
\eq
respectively.
The canonical Poisson structure
\beq\label{sd2}
\displaystyle{
\{\xi^{a}_{i},\eta^{b}_{j}\} = \delta^{ab}\delta_{ij}\,.
}
\eq
provides the Poisson-Lie structure (\ref{cg22}) for the variables $S^a_{ij}=\xi_{i}^{a}\eta_{j}^{a}$ and
similarly for ${\tilde S}^k_{ab}=\xi_{k}^{a}\eta_{k}^{b}$.

The main statement is that the spectral curves of both models (\ref{w03}) and (\ref{w04}) coincide.
The proof is based on the following simple determinant relation.
Consider an $(N+M)\times(N+M)$ matrix $\mats{A}{B}{C}{D}$, where $A$ and $D$ are invertible
 square matrices of sizes $N\times N$ and $M\times M$ respectively,
 while $B$ and $C$ are a rectangular $N\times M$ and $M\times N$ matrices. Then
  \beq\label{sd25}
  \begin{array}{c}
    \displaystyle{
 \det\limits_{(N+M)\times (N+M)}\mats{A}{B}{C}{D}=
 }
 \\ \ \\
     \displaystyle{
 =\det\limits_{N\times N}(A)\det\limits_{M\times M}\Big(D-CA^{-1}B\Big)=
 \det\limits_{M\times M}(D)\det\limits_{N\times N}\Big(A-BD^{-1}C\Big)\,.
 }
 \end{array}
 \eq
By choosing $A=V-v 1_N $ (of size $N\times N$), $B=\xi$ (of size $N\times M$), $C=\eta$ (of size $M\times N$) and $D=Z-z 1_M $ (of size $M\times M$), one gets (all signs are changed in the determinants below)
\beq\label{sd5}
\begin{array}{c}
	\displaystyle{
\det\limits_{N\times N}(v-V)\det\limits_{M\times M}\Big(z-Z-\eta(v-V)^{-1}\xi\Big)
		=\det\limits_{M\times M}(z-Z)\det\limits_{N\times N}\Big(v-V-\xi (z-Z)^{-1} \eta \Big)\,.
	}
\end{array}
\eq
Thus, the spectral curves indeed coincide since\footnote{
From viewpoint of (\ref{sd5}) it is natural to define the dual matrix as $Z+\eta(v-V)^{-1}\xi$, but we use
the transposed one (\ref{sd4}). Obviously, it does not effect the determinant relation.}
\beq\label{sd51}
\begin{array}{c}
	\displaystyle{
L(z)=V+\xi (z-Z)^{-1} \eta\in{\rm Mat}_N\,,\qquad {\ti L}^T(v)=Z+\eta(v-V)^{-1}\xi\in{\rm Mat}_M
	}
\end{array}
\eq
and the last  equality in (\ref{sd5}) means
\beq\label{sd6}
\begin{array}{c}
\displaystyle{
\frac{\det\limits_{N\times N}\Big(v-L(z)\Big)}{\det\limits_{N\times N}\Big(v-V\Big)}
			=
\frac{\det\limits_{M\times M}\Big(z-{\ti L}(v)\Big)}{\det\limits_{M\times M}\Big(z-Z\Big)}\,.
}
 \end{array}
\eq

\subsection{Large $N$ limit of rank one matrices}
The main purpose of this paper is to generalize the relation \eqref{sd6} to the field-theoretical case. The large $N$ limit of the spectral duality was already discussed for multi-matrix models \cite{BEHL}, where the curves of the type \eqref{w03} have a different nature and appear in the analysis of the action of differential operators on orthogonal polynomials. Here we apply the approach based on the field-theoretical representation of $A_{\h}$ described in the previous sections.

 The determinant relation (\ref{sd6}) is based on the property that the residues (\ref{w05})-(\ref{w06}) of
the Lax matrices are of rank 1. Let us formulate this property for the field variables.
For the functions on $\mathbb{T}^{2}$ (elements of $A_{\h}$) there is no canonical way to define a ''rank 1 function''. At the same time, an infinite-dimensional matrix of this type of course exists. Let the trace of
\beq\label{sd12}
\displaystyle{S_{ij} = \xi_{i}\eta_{j} \,,}
\eq
converge. Notice that \eqref{sd12} is in $\mathrm{Mat}_{\infty}(\mathbb{C}[[\h^{-1},\h]])$ only if finitely many $\xi_{i}$ and $\eta_{j}$ are not equal to zero. To calculate matrix powers of \eqref{sd12} with infinite number of nonzero elements one needs to impose some convergence condition, which in this case are equivalent to the existence of a convergent trace. To fulfill this condition we regard vector $\xi$ and covector $\eta$ as integrable functions on a circle $S^{1}$:
\beq\label{sd9}
\displaystyle{\xi(\phi) = \sum\limits_{n \in \mZ}\xi_{n}e^{in\phi}}\,,
\eq
\beq\label{sd10}
\displaystyle{\eta(\phi) = \sum\limits_{m \in \mZ}\eta_{m}e^{im\phi}}\,.
\eq
Then the trace of \eqref{sd12} is finite if
\beq\label{sd14}
\displaystyle{
	\int_{S^{1}}\mathrm{d}\theta\, \xi(\theta)\eta(-\theta) < \infty\,,
}
\eq
which is indeed true.

Introduce also the Poisson structure:
\beq\label{sd11}
\displaystyle{
	\{\xi(\phi),\eta(\theta)\} = \delta(\phi + \theta)\,,
}
\eq
where $\delta(\phi)$ is a delta-function on a circle
\beq\label{sdd5}
\displaystyle{
	\delta(\phi) = \sum\limits_{m \in \mZ}e^{im\phi}\,.
}
\eq
Nevertheless, \eqref{sd12} is not in $\mathrm{Mat}_{\infty}(\mathbb{C}[[\h^{-1},\h]])$ we  use the change of basis \eqref{n10}, and then
 write the corresponding function on $\mathbb{T}^{2}$ with the Fourier modes
\beq\label{sd13}
\displaystyle{
S_{\vec{m}} = \sum\limits_{i,j \in \mZ}W_{\vec{m},ij}\xi_{i}\eta_{j}\,.
}
\eq
\begin{predl}\label{lrsd1}
The field $S(\vec\phi)$ with the Fourier modes (\ref{sd13}) is represented in the form
	\beq\label{sdd1}
	\displaystyle{
		S(\vec{\phi}) = \sum\limits_{\vec{m} \in \mZ^{2}}S_{\vec{m}}e^{i\vec{m}\vec{\phi}}
 =  \frac{i}{4\pi \h}\,\eta(\phi_{2}) \star \delta(\phi_{1}) \star \xi(-\phi_{2})\,,
	}
	\eq
The Poisson brackets (\ref{sd11}) provide the Poisson structure (\ref{n28}):
	\beq\label{rsd9}
	\displaystyle{
		\left\{S(\vec{\phi}),S(\vec{\theta})\right\} = \frac{i}{4\pi \h}\left[S(\vec{\phi}),\delta(\vec{\phi} - \vec{\theta})\right]_{\star}\,.	
	}
	\eq
\end{predl}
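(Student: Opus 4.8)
The plan is to prove the two assertions separately, using the dictionary between the matrix entries $\xi_i\eta_j$ and the Fourier modes $S_{\vec m}$ supplied by the change of basis (\ref{n10}). For the representation (\ref{sdd1}) I would start from the explicit transition coefficients: substituting (\ref{n10}) into (\ref{sd13}) and resolving the Kronecker delta $\delta_{j,i+m_2}$ gives $S_{\vec m}=\frac{i}{4\pi\h}\,\om^{m_1m_2/2}\sum_{i\in\mZ}\om^{m_1 i}\xi_i\eta_{i+m_2}$, so that $S(\vec\phi)=\frac{i}{4\pi\h}\sum_{\vec m}\om^{m_1m_2/2}\big(\sum_i\om^{m_1 i}\xi_i\eta_{i+m_2}\big)e^{i\vec m\vec\phi}$. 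The content of (\ref{sdd1}) is then to recognize its right-hand side as exactly this series. I would expand the three factors in Fourier modes, writing them as combinations of the plane waves $e^{i(0,a)\vec\phi}$, $e^{i(b,0)\vec\phi}$ and $e^{i(0,-c)\vec\phi}$ (the reflected argument $\xi(-\phi_2)$ producing the sign on $c$), and compute the triple star-product by applying the basic rule (\ref{n18}) twice. The two successive cross products $(0,a)\times(b,0)=-ab$ and $(b,a)\times(0,-c)=-bc$ generate the phase $\om^{(ab+bc)/2}=\om^{b(a+c)/2}$, while the exponentials combine to $e^{ib\phi_1}e^{i(a-c)\phi_2}$. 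Relabelling $b=m_1$, $c=i$, $a=i+m_2$ turns $\om^{b(a+c)/2}$ into $\om^{m_1 i}\om^{m_1m_2/2}$ and matches $e^{i(b,a-c)\vec\phi}=e^{i\vec m\vec\phi}$, so the two series coincide term by term.

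For the bracket (\ref{rsd9}) I would first compute the induced brackets of the matrix entries. The Leibniz rule together with the Fourier-mode form of (\ref{sd11}), namely $\{\xi_i,\eta_j\}=\delta_{ij}$ and $\{\xi_i,\xi_j\}=\{\eta_i,\eta_j\}=0$, gives in two lines $\{\xi_i\eta_j,\xi_k\eta_l\}=S_{kj}\delta_{il}-S_{il}\delta_{kj}$, i.e. the Lie-Poisson structure (\ref{cg22}) for $S_{ij}=\xi_i\eta_j$. I would then pass to the $T_{\vec m}$ basis: inserting $W_{\vec m,ij}$ into $\{S_{\vec m},S_{\vec n}\}=\sum W_{\vec m,ij}W_{\vec n,kl}\{\xi_i\eta_j,\xi_k\eta_l\}$ and carrying out the same resummation as in the first part, the two delta-contractions collapse to the factor $\om^{(\vec m\times\vec n)/2}-\om^{-(\vec m\times\vec n)/2}=2i\sin(2\pi\h\,\vec m\times\vec n)$, which reproduces the sin-algebra Poisson structure (\ref{n22}) (up to the orientation convention in the cross product). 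Once (\ref{n22}) is in hand, the field-theoretic form (\ref{rsd9}) is not a new computation: it is precisely the statement of Proposition \ref{nl4} specialized to a single residue, so I would simply invoke that result rather than redo the integral/star-commutator manipulation.

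I expect the first identity to be the main obstacle, since it carries the genuine structural novelty: encoding a rank-one matrix, which has no intrinsic meaning for a function on $\mathbb{T}^2$, as a star-product of three one-dimensional objects. The delicate points are bookkeeping of the quadratic $\om$-phases through the two applications of (\ref{n18}) and, crucially, fixing the correct factor ordering — the product is noncommutative and only the order $\eta(\phi_2)\star\delta(\phi_1)\star\xi(-\phi_2)$ yields the phase $\om^{m_1m_2/2}$, a reversed ordering flipping the exponents. The presence of $\delta(\phi_1)$ also means these objects are distributional, so strictly the identity is to be read at the level of Fourier coefficients, exactly as (\ref{sd13}) prescribes. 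By contrast, the second part is essentially forced once the resummation technique of the first part and Proposition \ref{nl4} are available.
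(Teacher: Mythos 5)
Your proposal is correct and essentially reproduces the paper's proof: the representation (\ref{sdd1}) rests on the same substitution of (\ref{n10}) into (\ref{sd13}) together with the product rule (\ref{n18}) — you merely run the computation from the star-product side toward the Fourier series, where the paper shifts the summation index and resums the series into star products — and the bracket (\ref{rsd9}) is obtained exactly as in the paper, by reducing (\ref{sd11}) to the $\mathfrak{gl}_{\infty}^{*}$ bracket (\ref{rsd11}), transporting it through $W$ to the sin-algebra bracket (\ref{rsd13}) (your explicit delta-contraction fills in what the paper dispatches with ``$W$ is a homomorphism''), and then invoking Proposition \ref{nl4}. The orientation caveat you flag is genuine, but it reflects an inconsistency in the paper's own conventions (compare the cross-product ordering in (\ref{cg25}) with that in (\ref{n22})), so it is not a defect of your argument.
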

\begin{proof}
Plugging \eqref{n10} into \eqref{sd13} one gets
\beq\label{sdd2}
\displaystyle{
	S(\vec{\phi}) = \sum\limits_{\vec{m} \in \mZ^{2}}S_{\vec{m}}e^{i\vec{m}\vec{\phi}} = \frac{i}{4\pi \h}\sum\limits_{\vec{m} \in \mZ^{2}}e^{i\vec{m}\vec{\phi}}\om^{\frac{m_{1}m_{2}}{2}}\sum\limits_{a\in \mZ} \om^{m_{1}a}\xi_{a}\eta_{a+m_{2}}\,.
}
\eq
Next, by shifting $m_{2} \to m_{2} - a$ one gets
\beq\label{sdd3}
\displaystyle{
	S(\vec{\phi}) = \sum\limits_{\vec{m} \in \mZ^{2}}S_{\vec{m}}e^{i\vec{m}\vec{\phi}} = \frac{i}{4\pi \h}\sum\limits_{\vec{m} \in \mZ^{2}}e^{i\vec{m}\vec{\phi}}\om^{\frac{m_{1}m_{2}}{2}}\sum\limits_{a\in \mZ} e^{-ia\phi_{2}}\om^{\frac{m_{1}a}{2}}\xi_{a}\eta_{m_{2}}\,.
}
\eq
Using \eqref{n18}, combine the powers of $\om$ into the star-products:
\beq\label{sdd4}
\displaystyle{
S(\vec{\phi}) = \frac{i}{4\pi \h}\sum\limits_{m_{1},m_{2},a \in \mZ}e^{im_{2}\phi_{2}}\star e^{im_{1}\phi_{1}}\star e^{-ia\phi_{2}} \xi_{a}\eta_{m_{2}}\,.
}
\eq
In this way \eqref{sdd1} follows from the associativity and linearity of $\star$.

In terms of Fourier modes the Poisson brackets (\ref{sd11}) means
	\beq\label{rsd10}
	\displaystyle{
		\{\xi_{m},\eta_{k}\} = \delta_{mk}\,,\quad m,k\in\mZ\,.
	}
	\eq
	Therefore, \eqref{rsd10} defines the Poisson-Lie bracket on $\mathfrak{gl}_{\infty}^{*}$ due to
	\beq\label{rsd11}
	\displaystyle{
		\{\xi_{i}\eta_{j},\xi_{k}\eta_{l}\} = \delta_{il}\,\xi_{k}\eta_{j} - \delta_{kj}\,\xi_{i}\eta_{l}\,.
	}
	\eq
	Since $W_{\vec{m},ij}$ is a homomorphism of associative algebras we conclude
	\beq\label{rsd13}
	\displaystyle{
		\left\{S_{\vec{m}},S_{\vec{n}}\right\} =  \frac{1}{2\pi\h}\sin(2\pi \h \vec{m}\times \vec{n})S_{\vec{m}+\vec{n}}\,,
	}
	\eq
	which coincides with \eqref{rsd9}.
\end{proof}

Traces of powers of \eqref{sdd1} contain divergent integrals with $\delta(\phi_{1})^{k}$. To make sense of this integrals, we calculate them in Fourier modes and formally set
\beq\label{add1}
\displaystyle{
	\frac{i}{4\pi\h} \sum_{k \in \mZ}\om^{k(a-b)} = \delta_{ab}\,.
}
\eq
Let us illustrate this by calculating $\tr S \star S$: 
\beq\label{add2}
\begin{array}{c}
	\displaystyle{
	\frac{4 \pi \h}{i}\tr S \star S = \frac{i}{4 \pi \h}\sum\limits_{\vec{m},\vec{n}\in  \mZ^{2}}\sum\limits_{a,b\in  \mZ}\om^{\frac{m_{1}m_{2}}{2}+\frac{n_{1}n_{2}}{2}}\om^{m_{1}a+n_{1}b}\xi_{a}\eta_{a+m_{2}}\xi_{b}\eta_{b+n_{2}}\frac{1}{4 \pi^{2}} \int_{\mathbb{T}^{2}}\mathrm{d}\vec{\phi}\,e^{i \vec{m}\vec{\phi}}\star e^{i \vec{n}\vec{\phi}} = 
		
	}
	\\
	\displaystyle{
		= \sum\limits_{m_{1},k_{1},k_{2}\in \mZ} \eta_{k_{1}}\xi_{m_{1}+k_{1}}\eta_{k_{2}}\xi_{k_{2}-m_{1}} \frac{i}{4\pi \h}\sum\limits_{m_{2}\in \mZ} \om^{m_{2}(m_{1}+k_{1} - k_{2})} = 
	}
	\\
	\displaystyle{
		= \bigg(\sum\limits_{k\in \mZ} \xi_{k}\eta_{k}\bigg)^{2}  = \bigg(\int_{S^{1}}\mathrm{d}\phi\, \xi(\phi)\eta(-\phi)\bigg)^{2}\,.
	}

\end{array}
\eq
Finally, in order to write down the field version of the Lax matrix (\ref{sd3}) in the large $N$ limit
we introduce $M$ canonically conjugated functions on $S^{1}$
\beq\label{rsd7}
\displaystyle{
	\{\xi^{a}(\phi),\eta^{b}(\theta)\} = \delta^{ab}\delta(\phi + \theta)\,, \quad
a,b=1,...,M\,.
}
\eq
In this way we come to the rational $A_{\h}$ Gaudin model \eqref{rg15}
with the Lax matrix (or, more precisely, the Lax function) 
\beq\label{rsd18}
\displaystyle{
L(z,\vec{\phi}) = V(\phi_{1}) + \frac{i}{4 \pi \h}\sum\limits_{k = 1}^M\frac{\eta^{k}(\phi_{2}) \star \delta(\phi_{1}) \star \xi^{k}(-\phi_{2})}{z-z_{k}}\,,
}
\eq
where $V(\phi_{1})$ is a constant ''twist function''
\beq\label{sd15}
\displaystyle{V(\phi) = \frac{i}{4\pi\h}\sum\limits_{n,k \in \mZ} v_{k}\om^{nk}e^{i n\phi}
}
\eq
obtained by mapping the matrix $V=\sum\limits_k v_k E_{kk}$ to $V(\phi)$ through (\ref{dn1})
and considering the components in the basis $T_{\vec m}$ as Fourier modes.

\subsection{Dual model}
Consider first the dual Lax matrix (\ref{q03}) or (\ref{sd4}) in the finite-dimensional case.
Assuming  $|v| > |v_{i}|$ we may use the series expansion
\beq\label{rsd28}
\displaystyle{
	\frac{1}{v-v_{i}} = \sum\limits_{k = 1}^{\infty} \frac{v_{i}^{k-1}}{v^{k}}\,,
}
\eq
which represents  the dual Lax matrix (\ref{sd4}) in the form
\beq\label{rsd27}
\displaystyle{
	\tilde{L}_{ab}(v) = z_{a}\delta_{ab} +  \sum\limits_{i=1}^N\sum\limits_{k =1}^{\infty}\frac{v_{i}^{k-1}\xi^{a}_{i}\eta^{b}_{i}}{v^{k}}\,,\quad a,b=1,...,M\,,
}
\eq
or, equivalently,
\beq\label{sd17}
\displaystyle{\tilde{L}(v) = Z+\sum\limits_{k = 1}^{\infty}\frac{\tilde{S}^{[k-1]}}{v^{k}}\in{\rm Mat}_M
}
\eq
with
\beq\label{sd161}
\displaystyle{\tilde{S}^{[k]}_{ab} = \sum\limits_{i=1}^N v^{k}_{i}\xi^{a}_{i}\eta^{b}_{i}
\,,\quad a,b=1,...,M\,.
}
\eq
In this way we come to the Gaudin model with irregular singularities \cite{FFRL}.
The spectral duality for models of this type  was studied in \cite{VY} including degenerations
corresponding to Jordan blocks. We do not consider the gluing of marked points.
Our aim is to describe the large $N$ limit.
In the standard matrix basis $E_{ab}$ the generalization to infinite $N$ of (\ref{sd161})
is straightforward:
\beq\label{sd16}
\displaystyle{\tilde{S}^{[k]}_{ab} = \sum\limits_{i \in \mZ}v^{k}_{i}\xi^{a}_{i}\eta^{b}_{i}\,.
}
\eq
The Poisson brackets are given by the Poisson-Lie structure on the loop coalgebra (restricted to the negative loops):
\beq\label{rsd6}
\displaystyle{
	\{\tilde{S}^{[k]}_{1},\tilde{S}^{[m]}_{2}\} = \left[P_{12},\tilde{S}^{[k+m]}_{1}\right]\,,\quad k,m\geq 0\,.
}
\eq
or
\beq\label{rsd63}
\displaystyle{
	\{\tilde{S}^{[k]}_{ab},\tilde{S}^{[m]}_{cd}\} =
 \delta_{ad}\tilde{S}^{[k+m]}_{cb}-\delta_{bc}\tilde{S}^{[k+m]}_{ad}\,,\quad
 a,b,c,d=1,...,M;\ \quad k,m\geq 0\,.
}
\eq
The latter follows from direct substitution of (\ref{sd16}) and usage of
	\beq\label{rsd101}
	\displaystyle{
		\{\xi^a_{m},\eta^b_{k}\} = \delta^{ab}\delta_{mk}\,,\quad a,b=1,...,M;\ m,k\in\mZ\,,
	}
	\eq
which is equivalent to (\ref{rsd7}).

Writing it in terms of the fields (\ref{sd9})-(\ref{sd10}) and the twist function (\ref{sd15}) we get
\beq\label{rsd16}
\displaystyle{
	\tilde{S}^{[k]}_{ab} = \frac{1}{4\pi^{2}}\int_{\mathbb{T}^{2}}\mathrm{d}\vec{\phi}\,(V(\phi_{1}))^{k}\star \frac{i}{4\pi\h}\,\eta^{b}(\phi_{2})\star \delta(\phi_{1}) \star \xi^{a}(-\phi_{2}),\quad a,b=1,...,M\,.
}
\eq
Indeed, the matrix elements of \eqref{sd16} may be
written as traces of $\mathrm{Mat}_{\infty}(\mC[[\h^{-1},\h]])$ matrices
\beq\label{rsdd17}
\displaystyle{
	\tilde{S}^{[k]}_{ab} = \sum\limits_{i \in \mZ}v^{k}_{i}\xi^{a}_{i}\eta^{b}_{i} =
 \tr\, \Big(||\delta_{ij}v_{i}^{k}||\cdot ||\xi^{a}_{i}\eta^{b}_{j}||\Big)\,,\quad a,b=1,...,M\,.
}
\eq
According to the Proposition \ref{lrsd1} the infinite
matrix $||\xi^{a}_{i}\eta^{b}_{j}||$ ($i,j\in\mZ$) corresponds
to the function
$\frac{i}{4\pi \h}\,\eta^{b}(\phi_{2})\star \delta(\phi_{1}) \star \xi^{a}(-\phi_{2})$,
and the diagonal twist matrix $||v_{i}\delta_{ij}||$ corresponds to \eqref{sd15}.
Due to the definition of the trace \eqref{n20} on the
space $C^{\infty}(\mathbb{T}^{2})[[\h^{-1},\h]]$, we come to \eqref{rsd16}.

\subsection{Statement of duality}
Let us first introduce an infinite-dimensional analogue of the spectral curve \eqref{w03}, which
we write as
\beq\label{w031}
\displaystyle{
 \Gamma_N(v,z)=\det\limits_{N\times N}\Big(1_N-\frac{1}{v}\,L(z)\Big)=0\,.
}
\eq
Let $L(z) \in A_{\h}$ be the Lax matrix of some $A_{\h}$ integrable model.  Using
relation $\det(A) = e^{\tr \ln A}$ we define the expression for its spectral curve (power series)
to be
	\beq\label{sd7}
	\displaystyle{
\Gamma_\infty(v,z)=\exp\Big(\frac{4 \pi \h}{i}\tr \ln_{\star}\Big(I-\frac{1}{v}\,L(z,\vec{\phi})\Big)\Big)
    }
	\eq
or
	\beq\label{sd71}
	\displaystyle{
\Gamma_\infty(v,z)=\exp\Big(-\frac{4 \pi \h}{i}\sum\limits_{k = 1}^{\infty}\frac{\tr L^{\star k}(z)}{k v^{k}}\Big)\,,
\qquad L^{\star k}(z)=\underbrace{L(z)\star\dots\star L(z)}_{\rm k\ times}\,,
}
	\eq
where $\ln_{\star}$ is the logarithm power series with multiplication given by the $\star$-product.
 The expression in the r.h.s. of (\ref{sd71}) is a formal power series in $v^{-1}$, whose coefficients depend on the spectral parameter $z$ and provide conserved charges. Obviously, this expression coincides with the expression for the spectral curve (\ref{w031})
in the finite-dimensional case.

The main result of this section is the following statement generalizing relation (\ref{sd5}) for infinite $N$.
\begin{theorem}\label{t1}
The following identity holds true:
\beq\label{rsd17}
\displaystyle{
	\exp\Big(\frac{4 \pi \h}{i}\tr \ln_{\star}\Big(I-\frac{V(\phi_{1})}{v}\Big)\Big)\det\limits_{M\times M}(z-\tilde{L}(v)) = \det\limits_{M\times M}(z-Z)	\exp\Big(\frac{4 \pi \h}{i}\tr \ln_{\star}\Big(I-\frac{1}{v}\,L(z,\vec{\phi})\Big)\Big)\,.
}
\eq
\end{theorem}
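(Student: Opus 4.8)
The plan is to take logarithms and reduce the claimed equality of $\star$-determinants to an additive trace identity that mirrors the finite-dimensional Schur-complement/Sylvester mechanism behind (\ref{sd5})--(\ref{sd6}). Writing each $\star$-determinant as $\exp(\tr\ln_{\star}(\cdot))$, dividing (\ref{rsd17}) by $\det_{M}(z-Z)\exp(\tr\ln_{\star}(I-V(\phi_{1})/v))$ and taking logarithms shows that the theorem is equivalent to

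\beq
\tr\ln_{\star}\Big(I-\frac{1}{v}L(z,\vec{\phi})\Big)-\tr\ln_{\star}\Big(I-\frac{1}{v}V(\phi_{1})\Big)=\tr_{M}\ln\big(z-\tilde{L}(v)\big)-\tr_{M}\ln\big(z-Z\big)\,,
\eq

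where $\tr$ is the $A_{\h}$-trace (\ref{n20}) and $\tr_{M}$ the ordinary $M\times M$ trace. I would prove this identity and then exponentiate. Two ingredients are needed, both resting on cyclicity of the trace (\ref{n13}): multiplicativity of the $\star$-determinant, $\tr\ln_{\star}(A\star B)=\tr\ln_{\star}A+\tr\ln_{\star}B$, and a $\star$-product analogue of Sylvester's identity $\det_{N}(1-PQ)=\det_{M}(1-QP)$.

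First I would factor the Lax function. Since $V(\phi_{1})$ depends on $\phi_{1}$ only, its $\star$-powers coincide with ordinary powers, and the geometric series $G^{-1}_{\star}:=\sum_{m\geq 0}(V(\phi_{1})/v)^{m}$ is a well-defined formal power series in $1/v$ inverting $G:=I-V(\phi_{1})/v$ for the $\star$-product. Using (\ref{rsd18}) one verifies the factorization

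\beq
I-\frac{1}{v}L(z,\vec{\phi})=G\star\Big(I-G^{-1}_{\star}\star\frac{1}{v}\big(L(z,\vec{\phi})-V(\phi_{1})\big)\Big)\,,
\eq

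so that by multiplicativity of the $\star$-determinant the $V$-term cancels the second contribution on the left of the additive identity, leaving $\tr\ln_{\star}(I-G^{-1}_{\star}\star\frac{1}{v}(L-V))$. Here $L-V$ is the sum over $k$ of the rank-one fields $\frac{i}{4\pi\h}\eta^{k}(\phi_{2})\star\delta(\phi_{1})\star\xi^{k}(-\phi_{2})/(z-z_{k})$ of Proposition \ref{lrsd1}.

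The core step is to show that this $A_{\h}$-trace collapses onto the finite $M$-index. Expanding $\ln_{\star}$ and every $\star$-power as a series in $1/v$, each term is a closed $\star$-string in which the rank-one blocks $\eta^{a}(\phi_{2})\star\delta(\phi_{1})\star\xi^{b}(-\phi_{2})$ alternate with powers of the diagonal twist $V(\phi_{1})^{m}$. By cyclicity of the $A_{\h}$-trace I would bring each block into the normal-ordered form of (\ref{rsd16})--(\ref{rsdd17}); the $\vec{\phi}$-integration then contracts the internal field indices and reproduces exactly the finite matrix elements $\tilde{S}^{[m]}_{ab}=\sum_{i\in\mZ}v_{i}^{m}\,\xi^{a}_{i}\eta^{b}_{i}$ of (\ref{sd16}). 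Resumming in $1/v$ identifies the result with the $M\times M$ trace

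\beq
\tr\ln_{\star}\Big(I-G^{-1}_{\star}\star\frac{1}{v}\big(L(z,\vec{\phi})-V\big)\Big)=\tr_{M}\ln\big(1-(z-Z)^{-1}(\tilde{L}^{T}(v)-Z)\big)\,,
\eq

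where $\tilde{L}^{T}(v)=Z+\eta(v-V)^{-1}\xi$ is the dual matrix (\ref{sd51}) assembled from the fields as in (\ref{rsd16}). Since $\det_{M}$ is transpose-invariant, $1-(z-Z)^{-1}(\tilde{L}^{T}-Z)=(z-Z)^{-1}(z-\tilde{L}^{T})$ gives $\tr_{M}\ln(z-\tilde{L}(v))-\tr_{M}\ln(z-Z)$, which is the additive identity; exponentiating yields (\ref{rsd17}).

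The hard part will be the collapse step: making precise that the $A_{\h}$-trace of $\star$-powers reduces to a finite $M$-trace, i.e. the infinite-dimensional incarnation of Sylvester's identity. This requires that the rank-one fields lie in $\mathrm{Mat_{conv}}(\mC)$ so that all traces converge --- guaranteed by the convergent-trace hypothesis on $S_{ij}=\xi_{i}\eta_{j}$ and condition (\ref{n8}) --- and control of the reorderings under $\star$, for which cyclicity (\ref{n13}) is essential and the $\phi_{1}$-only dependence of $V$ (so that $V^{\star m}=V^{m}$) removes ordering ambiguities in the twist insertions. An equivalent and cleaner route to certify these formal manipulations is to pass to the matrix representation $W^{-1}$ (\ref{n9}): there $L(z)=V+\xi(z-Z)^{-1}\eta$ is a genuine element of $\mathrm{Mat}_{\infty}(\mC[[\h^{-1},\h]])$ with rectangular $\xi,\eta$, the $\star$-product becomes ordinary matrix multiplication and $\tr\ln_{\star}=\tr\ln$, so the finite computation (\ref{sd25})--(\ref{sd6}) applies verbatim once convergence of the $1/v$-series is checked term by term.
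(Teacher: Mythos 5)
Your proposal is correct and follows essentially the same route as the paper: both reduce \eqref{rsd17} to the factorization $I-\tfrac{1}{v}L=\big(I-\tfrac{V}{v}\big)\star\big(I-\sum_{k\geq1}\tfrac{V^{k-1}}{v^{k}}\star\tfrac{1}{v^{0}}(L-V)\big)$ via the geometric series \eqref{rsd23}, use additivity of $\tr\ln_{\star}$ under $\star$-products \eqref{rsd22}, and rest on the same Sylvester-type trace swap --- your ``collapse'' of the $A_{\h}$-trace onto the $M\times M$ trace is exactly the paper's identity \eqref{rsd20}--\eqref{rsdd20} read in the opposite direction (the paper expands $\tr_{M}$ of powers of $(z-Z)^{-1}\sum_{k}\tilde{S}^{[k-1]}v^{-k}$ and recognizes it as a field-trace of $\star$-powers, while you contract the field-trace down to $\tilde{S}^{[m]}_{ab}$). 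Your closing suggestion to certify everything in the $\mathrm{Mat}_{\infty}$ representation via $W^{-1}$ is also how the paper actually performs that key step, cf.\ \eqref{rsdd17}.
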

\begin{proof}
Dividing both sides of (\ref{rsd17}) by $\det\limits_{M\times M}(z-Z)$ and plugging ${\ti L}(v)$ from (\ref{sd17}),
rewrite (\ref{rsd17}) in the form of relation
\beq\label{rsd19}
\begin{array}{c}
\displaystyle{
		\exp\Big(\frac{4 \pi \h}{i}\tr \ln_{\star}\Big(I-\frac{V(\phi_{1})}{v}\Big)\Big)
\exp\Big(\tr\ln\Big(I-(z-Z)^{-1}\sum\limits_{k = 1}^{\infty}\frac{\tilde{S}^{[k-1]}}{v^{k}}\Big)\Big)
=
}
\\
\displaystyle{
= \exp\Big(\frac{4 \pi \h}{i}\tr \ln_{\star}\Big(I-\frac{1}{v}L(z,\vec{\phi})\Big)\Big)\,,
}
\end{array}
\eq
which we need to prove.
For this purpose observe that
\beq\label{rsd20}
\begin{array}{c}
\displaystyle{
			\tr \Big((z-Z)^{-1}\sum\limits_{k = 1}^{\infty}\frac{\tilde{S}^{[k-1]}}{v^{k}}\Big)^{n} =
}
		\\ \ \\
		\displaystyle{
=\sum\limits_{k_{1},...,k_{n}=1}^\infty\frac{1}{v^{k_{1}+...+k_{n}}}
\sum\limits_{p_{1},...,p_{n}\in\mZ}\sum\limits_{a_{1},...,a_{n}=1}^M
\frac{v_{p_{1}}^{k_{1}-1}\xi^{a_{1}}_{p_{1}}\eta^{a_{2}}_{p_{1}}}{z-z_{a_{1}}}\dots
\frac{v_{p_{n}}^{k_{n}-1}\xi^{a_{n}}_{p_{n}}\eta^{a_{1}}_{p_{n}}}{z-z_{a_{n}}} =
		}
		\\ \ \\
		\displaystyle{
\sum\limits_{k_{1},...,k_{n}=1}^\infty\frac{1}{v^{k_{1}+...+k_{n}}}
\sum\limits_{p_{1},...,p_{n}\in\mZ}\sum\limits_{a_{1},...,a_{n}=1}^M
\frac{v_{p_{n}}^{k_{n}-1}\xi^{a_{n}}_{p_{n}}\eta^{a_{n}}_{p_{n-1}}}{z-z_{a_{n}}}
\dots
\frac{v_{p_{2}}^{k_{2}-1}\xi^{a_{2}}_{p_{2}}\eta^{a_{2}}_{p_{1}}}{z-z_{a_{2}}}
\frac{v_{p_{1}}^{k_{1}-1}\xi^{a_{1}}_{p_{1}}\eta^{a_{1}}_{p_{n}}}{z-z_{a_{1}}}\,. }
	\end{array}
\eq
The summation over $p_{1},...,p_{n}\in\mZ$ corresponds to the trace of the product of infinite matrices.
By rewriting it in terms of functions as in \eqref{sdd1}, we get
\beq\label{rsdd20}
\displaystyle{
			\tr \Big(((z-Z)^{-1}\sum\limits_{k = 1}^{\infty}\frac{\tilde{S}^{[k-1]}}{v^{k}}\Big)^{n}
 = \frac{4 \pi \h}{i}\tr\Big(\sum\limits_{k =1}^{\infty}\frac{V(\phi_{1})^{k-1}}{v^{k}}\star \frac{i}{4 \pi \h}
 \sum\limits_{a = 1}^M
 \frac{\eta^{a}(\phi_{2})\star \delta(\phi_{1}) \star \xi^{a}(-\phi_{2}) }{z-z_{a}}\Big)^{\star n}\,.
            }
\eq
Since the expression for $\exp(\tr \ln (I-A))$ is independent of the size of $A$ (even if $A$ is infinite-dimensional),
 we come to
\beq\label{rsd21}
\begin{array}{c}
	\displaystyle{
	\exp\Big(\frac{4 \pi \h}{i}\tr \ln_{\star}\Big(I-\frac{V(\phi_{1})}{v}\Big)\Big)\exp\Big(\tr\ln\Big(I-(z-Z)^{-1}\sum\limits_{k = 1}^{\infty}\frac{\tilde{S}^{[k-1]}}{v^{k}}\Big)\Big) =
	}
\end{array}
\eq
$$
	\displaystyle{
		= 	\exp\Big(\frac{4 \pi \h}{i}\tr \ln_{\star}\Big(I-\frac{V(\phi_{1})}{v}\Big)+\tr \ln_{\star}\Big(I-\sum\limits_{k =1}^{\infty}\frac{V(\phi_{1})^{k-1}}{v^{k}}\star \sum\limits_{a = 1}^M\frac{\eta^{a}(\phi_{2})\star \delta(\phi_{1}) \star \xi^{a}(-\phi_{2})}{z-z_{a}}\Big)\Big)\,.
	}
$$
Next, we use the identity
\beq\label{rsd22}
\displaystyle{
	\tr\ln_{\star}(I - A)(I - B) = \tr\ln_{\star}(I - A)+ \tr\ln_{\star}(I - B)
}
\eq
to transform the second line of \eqref{rsd21} as
\beq\label{rsd211}
\begin{array}{c}
	\displaystyle{
	\exp(\frac{4 \pi \h}{i}\tr \ln_{\star}(I-\frac{V(\phi_{1})}{v}))\exp(\tr\ln(I-(z-Z)^{-1}\sum\limits_{k = 1}^{\infty}\frac{\tilde{S}^{[k-1]}}{v^{k}}) =
	}
	\\ \ \\
	\displaystyle{
		= 	\exp\Big(\frac{4 \pi \h}{i}\tr \ln_{\star}\Big(\Big[I-\frac{V(\phi_{1})}{v}\Big] \star
\Big[I-\sum\limits_{k =1}^{\infty}\frac{V(\phi_{1})^{k-1}}{v^{k}}\star
\frac{i}{4 \pi \h}\sum\limits_{a = 1}^M\frac{\eta^{a}(\phi_{2})\star \delta(\phi_{1}) \star \xi^{a}(-\phi_{2})}{z-z_{a}}\Big]\Big)\Big)\,.
	}
\end{array}
\eq
Finally, using another obvious relation
\beq\label{rsd23}
\displaystyle{
	(v-V(\phi_{1}))\star \sum\limits_{k =1}^{\infty}\frac{V(\phi_{1})^{k-1}}{v^{k}} = 1
}
\eq
rewrite the expression of argument of $\ln_\star$ in the last line of \eqref{rsd211}:
\beq\label{rsd24}
\begin{array}{c}
	\displaystyle{
\Big[I-\frac{V(\phi_{1})}{v}\Big] \star
\Big[I-\sum\limits_{k =1}^{\infty}\frac{V(\phi_{1})^{k-1}}{v^{k}}\star
\frac{i}{4 \pi \h}\sum\limits_{a = 1}^M\frac{\eta^{a}(\phi_{2})\star
 \delta(\phi_{1}) \star \xi^{a}(-\phi_{2})}{z-z_{a}}\Big] =
	}
	\\ \ \\
	\displaystyle{
=\frac{v - V(\phi_{1})}{v} - \frac{1}{v}\frac{i}{4 \pi \h}\sum\limits_{a = 1}^M
\frac{\eta^{a}(\phi_{2})\star \delta(\phi_{1}) \star \xi^{a}(-\phi_{2})}{z-z_{a}}
=
I-\frac{1}{v}\,L(z,\vec{\phi})\,.
}
\end{array}
\eq
This finishes the proof of \eqref{rsd19}, and therefore the proof of the Theorem.
\end{proof}

Notice that \eqref{rsd17}  describes the relation between the spectral curve of the dual model \eqref{sd17} and the spectral curve (power series)
 of the original model \eqref{rsd18}. Therefore, we treat the statement of Theorem \ref{t1} as
(the field analogue of) the spectral duality between rational Gaudin models $(\ref{sd3}-\ref{sd6})$ in the large $N$ limit.

A similar approach allows one to describe the large $N$ limit of the duality between the classical generalized (that is twisted and inhomogeneous)  XXX Heisenberg spin chain and the trigonometric Gaudin models as well as the duality
between XXZ generalized spin chains. These results will be given in the forthcoming papers.

\subsection*{Acknowledgments}
This work was supported by the Russian Science Foundation under grant no. 25-11-00081,\\
https://rscf.ru/en/project/25-11-00081/ and performed at Steklov Mathematical Institute of Russian Academy of Sciences.

\begin{small}

\end{small}

\end{document}